\newtheorem{theorem}{Theorem}
\newtheorem{lemma}[theorem]{Lemma}
\theoremstyle{definition}
\newtheorem{definition}{Definition}
\newtheorem{property}{Property}
\newcommand{\pats}{{\sc Pats}}
\newcommand{\oneinthreeSAT}{\ensuremath{\mbox{{\sc 1-in-3-Sat}}}}
\newcommand{\sat}{\ensuremath{\mbox{\sc Sat}}}
\newcommand{\NP}{{\bf NP}}
\newcommand{\north}{{\tt N}}
\newcommand{\west}{{\tt W}}
\newcommand{\south}{{\tt S}}
\newcommand{\east}{{\tt E}}
\newcommand{\true}{{\tt T}}
\newcommand{\false}{{\tt F}}
\newcommand{\ce}{{\tt CE}}
\tikzstyle{tile} = [draw=black, minimum width=7mm, minimum height=7mm, node distance=7mm]
\tikzstyle{bluetile} = [tile, fill=cyan!20]
\tikzstyle{graytile} = [tile, fill=gray!20]
\tikzstyle{yellowtile} = [tile, fill=yellow]
\newcommand{\blacklowertriangle}[2]{
	\draw[fill=black] (#1, #2) -- (#1-0.35, #2-0.35) -- (#1+0.35, #2-0.35) -- (#1+0.35, #2+0.35) -- cycle; 
}
\newcommand{\whitelowertriangle}[2]{
	\draw[fill=white] (#1, #2) -- (#1-0.35, #2-0.35) -- (#1+0.35, #2-0.35) -- (#1+0.35, #2+0.35) -- cycle; 
}
\newcommand{\glnF}[2]{
	\draw[thick] (#1-0.1, #2+0.35) -- (#1-0.1, #2+0.65);
	\node at (#1+0.025, #2+0.5) {\tiny F}
}
\newcommand{\glnT}[2]{
	\draw[thick] (#1-0.2, #2+0.35) -- (#1-0.2, #2+0.65);
	\node at (#1-0.075, #2+0.5) {\tiny T}
}
\newcommand{\glnn}[2]{
	\draw[thick] (#1+0.2, #2+0.35) -- (#1+0.2, #2+0.65);
	\node at (#1+0.325, #2+0.5) {\scriptsize n}
}
\newcommand{\glnv}[2]{
	\draw[thick] (#1+0.1, #2+0.35) -- (#1+0.1, #2+0.65);
	\node at (#1+0.225, #2+0.5) {\scriptsize v}
}
\newcommand{\glnc}[2]{
	\draw[thick] (#1, #2+0.35) -- (#1, #2+0.65);
	\node at (#1+0.125, #2+0.5) {\scriptsize c}
}
\newcommand{\glwF}[2]{
	\draw[thick] (#1-0.35, #2+0.1) -- (#1-0.65, #2+0.1);
	\node at (#1-0.5, #2+0.2) {\tiny F}
}
\newcommand{\glwT}[2]{
	\draw[thick] (#1-0.35, #2+0.2) -- (#1-0.65, #2+0.2);
	\node at (#1-0.5, #2+0.3) {\tiny T}
}
\newcommand{\glwf}[2]{
	\draw[thick] (#1-0.35, #2-0.2) -- (#1-0.65, #2-0.2);
	\node at (#1-0.5, #2-0.1) {\scriptsize f}
}
\newcommand{\glwt}[2]{
	\draw[thick] (#1-0.35, #2-0.1) -- (#1-0.65, #2-0.1);
	\node at (#1-0.5, #2) {\scriptsize t}
}
\newcommand{\glws}[2]{
	\draw[thick] (#1-0.35, #2) -- (#1-0.65, #2);
	\node at (#1-0.5, #2+0.1) {\scriptsize s}
}
\newcommand{\glsF}[2]{
	\draw[thick] (#1-0.1, #2-0.35) -- (#1-0.1, #2-0.65);
	\node at (#1+0.025, #2-0.5) {\tiny F}
}
\newcommand{\glsT}[2]{
	\draw[thick] (#1-0.2, #2-0.35) -- (#1-0.2, #2-0.65);
	\node at (#1-0.075, #2-0.5) {\tiny T}
}
\newcommand{\glsn}[2]{
	\draw[thick] (#1+0.2, #2-0.35) -- (#1+0.2, #2-0.65);
	\node at (#1+0.325, #2-0.5) {\scriptsize n}
}
\newcommand{\glsv}[2]{
	\draw[thick] (#1+0.1, #2-0.35) -- (#1+0.1, #2-0.65);
	\node at (#1+0.225, #2-0.5) {\scriptsize v}
}
\newcommand{\glsc}[2]{
	\draw[thick] (#1, #2-0.35) -- (#1, #2-0.65);
	\node at (#1+0.125, #2-0.5) {\scriptsize c}
}
\newcommand{\gleF}[2]{
	\draw[thick] (#1+0.35, #2+0.1) -- (#1+0.65, #2+0.1);
	\node at (#1+0.5, #2+0.2) {\tiny F}
}
\newcommand{\gleT}[2]{
	\draw[thick] (#1+0.35, #2+0.2) -- (#1+0.65, #2+0.2);
	\node at (#1+0.5, #2+0.3) {\tiny T}
}
\newcommand{\glef}[2]{
	\draw[thick] (#1+0.35, #2-0.2) -- (#1+0.65, #2-0.2);
	\node at (#1+0.5, #2-0.1) {\scriptsize f}
}
\newcommand{\glet}[2]{
	\draw[thick] (#1+0.35, #2-0.1) -- (#1+0.65, #2-0.1);
	\node at (#1+0.5, #2) {\scriptsize t}
}
\newcommand{\gles}[2]{
	\draw[thick] (#1+0.35, #2) -- (#1+0.65, #2);
	\node at (#1+0.5, #2+0.1) {\scriptsize s}
}
\newcommand{\blueff}[2]{
	\node[bluetile] at (#1, #2) {};
	\glnF{#1}{#2}; 
	\glwF{#1}{#2};
	\glsF{#1}{#2};
	\gleF{#1}{#2}
}
\newcommand{\blueft}[2]{
	\node[bluetile] at (#1, #2) {};
	\glnT{#1}{#2};
	\glwF{#1}{#2};
	\glsT{#1}{#2};
	\gleF{#1}{#2}
}
\newcommand{\bluetf}[2]{
	\node[bluetile] at (#1, #2) {};
	\glnF{#1}{#2}; 
	\glwT{#1}{#2};
	\glsF{#1}{#2};
	\gleT{#1}{#2}
}
\newcommand{\bluett}[2]{
	\node[bluetile] at (#1, #2) {};
	\glnT{#1}{#2};
	\glwT{#1}{#2};
	\glsT{#1}{#2};
	\gleT{#1}{#2}
}
\newcommand{\whitef}[2]{
	\node[tile] at (#1, #2) {};
	\glnn{#1}{#2};
	\glwf{#1}{#2};
	\glsn{#1}{#2};
	\glef{#1}{#2}
}
\newcommand{\whitet}[2]{
	\node[tile] at (#1, #2) {};
	\glnn{#1}{#2};
	\glwt{#1}{#2};
	\glsn{#1}{#2};
	\glet{#1}{#2}
}
\newcommand{\blackf}[2]{
	\node[tile, fill=black] at (#1, #2) {};
	\glnv{#1}{#2};
	\glwf{#1}{#2};
	\glsv{#1}{#2};
	\glef{#1}{#2}
}
\newcommand{\blackt}[2]{
	\node[tile, fill=black] at (#1, #2) {};
	\glnv{#1}{#2};
	\glwt{#1}{#2};
	\glsv{#1}{#2};
	\glet{#1}{#2}
}
\newcommand{\DGNLwF}[2]{
	\node[bluetile] at (#1, #2) {};
	\draw[fill=white] (#1+0.35, #2-0.35) -- (#1+0.35, #2+0.35) -- (#1-0.35, #2-0.35) -- cycle; 
	\glnF{#1}{#2};
	\glwF{#1}{#2};
	\glsn{#1}{#2};
	\glef{#1}{#2}
}
\newcommand{\DGNLwT}[2]{
	\node[bluetile] at (#1, #2) {};
	\draw[fill=white] (#1+0.35, #2-0.35) -- (#1+0.35, #2+0.35) -- (#1-0.35, #2-0.35) -- cycle; 
	\glnF{#1}{#2};
	\glwT{#1}{#2};
	\glsn{#1}{#2};
	\glet{#1}{#2}
}
\newcommand{\DGNLbF}[2]{
	\node[bluetile] at (#1, #2) {};
	\draw[fill=black] (#1+0.35, #2-0.35) -- (#1+0.35, #2+0.35) -- (#1-0.35, #2-0.35) -- cycle; 
	\glnF{#1}{#2};
	\glwF{#1}{#2};
	\glsv{#1}{#2};
	\glef{#1}{#2}
}
\newcommand{\DGNLbT}[2]{
	\node[bluetile] at (#1, #2) {};
	\draw[fill=black] (#1+0.35, #2-0.35) -- (#1+0.35, #2+0.35) -- (#1-0.35, #2-0.35) -- cycle; 
	\glnT{#1}{#2};
	\glwT{#1}{#2};
	\glsv{#1}{#2};
	\glet{#1}{#2}
}
\newcommand{\Initf}[2]{
	\node[graytile] at (#1, #2) {};
	\node at (#1, #2) {\scriptsize {\tt Init}}; 
	\glnc{#1}{#2};
	\glwf{#1}{#2};
	\glsc{#1}{#2};
	\gleF{#1}{#2};
}
\newcommand{\Initt}[2]{
	\node[graytile] at (#1, #2) {};
	\node at (#1, #2) {\scriptsize {\tt Init}}; 
	\glnc{#1}{#2};
	\glwt{#1}{#2};
	\glsc{#1}{#2};
	\gleT{#1}{#2};
}
\newcommand{\redF}[2]{
	\node[tile, fill=red, text=white] at (#1, #2) {\tt F}; 
	\glnc{#1}{#2};
	\glwF{#1}{#2};
	\glsc{#1}{#2};
	\glef{#1}{#2};
}
\newcommand{\blueT}[2]{
	\node[tile, fill=blue, text=white] at (#1, #2) {\tt T}; 
	\glnc{#1}{#2};
	\glwT{#1}{#2};
	\glsc{#1}{#2};
	\glet{#1}{#2};
}
\newcommand{\yellows}[2]{
	\node[tile, fill=yellow] at (#1, #2) {};
	\glnT{#1}{#2};
	\glws{#1}{#2};
	\glsT{#1}{#2};
	\gles{#1}{#2};
}
\newcommand{\greenSat}[2]{
	\node[tile, fill=green] at (#1, #2) {\footnotesize {\tt Sat}};
	\glnF{#1}{#2};
	\glws{#1}{#2};
	\glsc{#1}{#2};
	\gleF{#1}{#2};
}
\newcommand{\CEff}[2]{
	\node[tile] at (#1, #2) {\tt CE};
	\glnF{#1}{#2};
	\glwf{#1}{#2};
	\glsF{#1}{#2};
	\glef{#1}{#2};
}
\newcommand{\CEfs}[2]{
	\node[tile] at (#1, #2) {\tt CE};
	\glnF{#1}{#2};
	\glwf{#1}{#2};
	\glsT{#1}{#2};
	\gles{#1}{#2};
}
\newcommand{\CEss}[2]{
	\node[tile] at (#1, #2) {\tt CE};
	\glnF{#1}{#2};
	\glws{#1}{#2};
	\glsF{#1}{#2};
	\gles{#1}{#2};
}
\newcommand{\bluetile}[7]{
	\node[bluetile] at (#1, #2) {#3}; 
	\node at (#1, #2+0.5) {#4};
	\node at (#1-0.5, #2) {#5};
	\node at (#1, #2-0.5) {#6};
	\node at (#1+0.5, #2) {#7};
}
\newcommand{\yellowtile}[7]{
	\node[bluetile, fill=yellow] at (#1, #2) {#3}; 
	\node at (#1, #2+0.5) {\scriptsize #4};
	\node at (#1-0.5, #2) {\scriptsize #5};
	\node at (#1, #2-0.5) {\scriptsize #6};
	\node at (#1+0.5, #2) {\scriptsize #7};
}
\newcommand{\whitetile}[7]{
	\node[tile] at (#1, #2) {#3}; 
	\node at (#1, #2+0.5) {\scriptsize #4};
	\node at (#1-0.5, #2) {\scriptsize #5};
	\node at (#1, #2-0.5) {\scriptsize #6};
	\node at (#1+0.5, #2) {\scriptsize #7};
}
\newcommand{\seed}[2]{
	\foreach \x in {0, 1, ..., #1} {\node[tile,fill=gray] at (\x, 0) {};}
	\foreach \y in {1, ..., #2} {\node[tile,fill=gray] at (0, \y) {};}
}
\title{A manually-checkable proof for the NP-hardness of 11-color pattern self-assembly tile set synthesis\footnote{This work is supported in part by NSF Grants CCF-1049899 and CCF-1217770 to A.~J.~and M-Y.~K.~and HIIT Pump Priming Grants No.~902184/T30606 and Academy of Finland, Postdoctoral Researcher Grants 13266670/T30606 to S.~S.}}
\author{
Aleck Johnsen\footnote{Department of Electrical Engineering and Computer Science, Northwestern University, Ford Motor Company Engineering Design Center, 2133 Sheridan Road, Evanston, Illinois, 60208, USA. {\tt aleckjohnsen2012@u.northwestern.edu}}, 
Ming-Yang Kao\footnote{Department of Electrical Engineering and Computer Science, Northwestern University, 2145 Sheridan Road, Evanston, Illinois, 60208, USA. {\tt kao@northwestern.edu}}, 
and 
Shinnosuke Seki\footnote{Helsinki Institute for Information Technology (HIIT), Department of Information and Computer Science, Aalto University, P.O.Box 15400, FI-00076, Aalto, Finland. {\tt shinnosuke.seki@aalto.fi}} \footnote{Corresponding author}
}
\begin{document}

\maketitle

\begin{abstract}
	Patterned self-assembly tile set synthesis (\pats) aims at finding a minimum tile set to uniquely self-assemble a given rectangular (color) pattern. 
	For $k \ge 1$, $k$-{\pats} is a variant of {\pats} that restricts input patterns to those with at most $k$ colors. 
	A computer-assisted proof has been recently proposed for 2-{\pats} by Kari et al.~[arXiv:1404.0967 (2014)]. 
	In contrast, the best known manually-checkable proof is for the \NP-hardness of 29-{\pats} by Johnsen, Kao, and Seki~[ISAAC 2013, LNCS 8283, pp.~699-710]. 
	We propose a manually-checkable proof for the \NP-hardness of 11-\pats. 
\end{abstract}

	\section{Introduction}

Tile self-assembly is an algorithmically rich model of ``programmable crystal growth.'' 
Well-designed molecules (square-like ``tiles'') with specific binding sites can deterministically form a single target shape even subject to the chaotic nature of molecules floating in a well-mixed chemical soup. 
Such tiles were experimentally implemented as DNA double-crossover molecules in 1998 \cite{WiLiWeSe1998}. 

\begin{figure}[tb]
\begin{center}
\begin{minipage}{0.4\linewidth}
\includegraphics[scale=0.45]{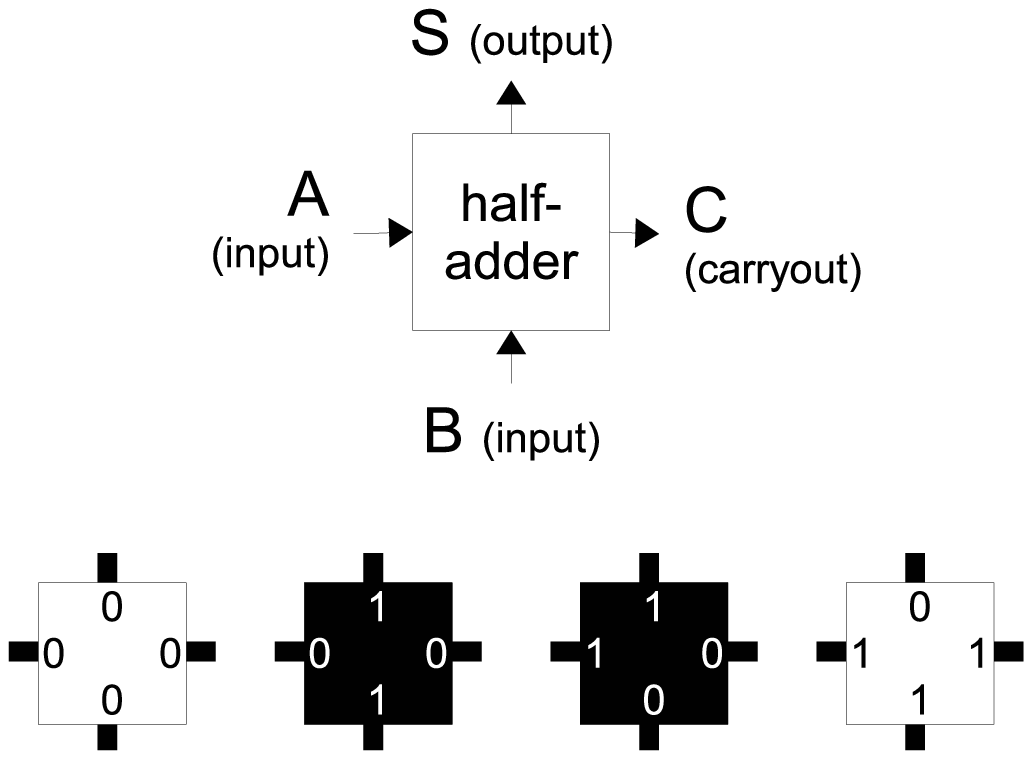}
\end{minipage}
\begin{minipage}{0.05\linewidth}
\hspace*{5mm}
\end{minipage}
\begin{minipage}{0.5\linewidth}
\includegraphics[scale=0.3]{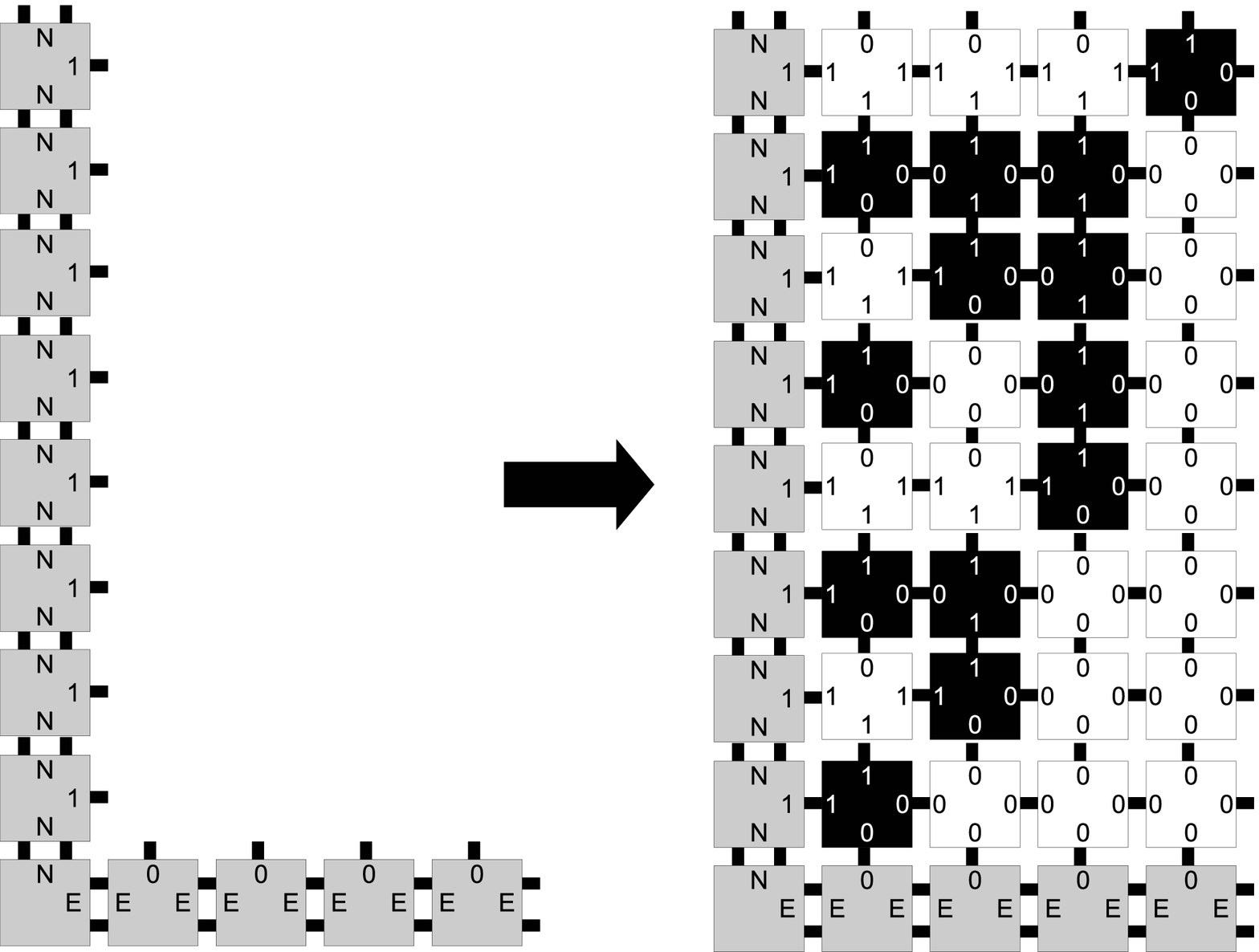}
\end{minipage}
\end{center}
\caption{
	(Left) Four tile types implement together the half-adder with two inputs A, B from the west and south, the output S to the north, and the carryout C to the east. 
	(Right) Copies of the ``half-adder'' tile types turn the L-shape seed into the binary counter pattern. 
}
\label{fig:SA_counter}
\end{figure}

Shape-building is one primary goal of self-assembly; pattern-painting is another. 
Based on the abstract Tile Assembly Model (aTAM) introduced by Winfree \cite{Winfree_PhDthesis}, Ma and Lombardi have first shed light on the pattern assembly \cite{MaLombardi2008,MaLombardi2009}. 
For the theory and practice of (color) pattern\footnote{``Pattern'' is a quite versatile term. In this paper, by pattern, we always mean a color pattern.} assembly, a simpler variant of TAM system (TAS) called the {\it rectilinear TAS} (RTAS) was proposed. 
As exemplified in Figure~\ref{fig:SA_counter}, an RTAS is provided with an L-shape seed (scaffold) as well as a finite number of tile types (the RTAS in the figure has four tile types: two white's and two black's) and their copies (i.e., tiles) attach to the seed and assemble a pattern (a binary counter pattern in the figure). 
The problem of {\em patterned self-assembly tile set synthesis} ({\pats}) aims at minimizing the number of tile types necessary for an RTAS to uniquely assemble a given rectangular pattern. 
An exhaustive partition-search algorithm as well as a randomized search algorithm \cite{GoLeCzOr2014} have been proposed for this problem. 

It is not until the number of colors included in the pattern is bounded by some constant $k \ge 1$ that {\pats} gets practically meaningful, as summarized in DNA 18\footnote{The 18th International Conference on DNA Computing and Molecular Programming.} as: ``{\it any given logic circuit can be formulated as a colored rectangular pattern with tiles, using only a constant number of colors}.'' 
We call this variant the {\em $k$-{\pats}}. 
The first result about $k$-{\pats} is the recent proof of the \NP-hardness of 60-{\pats} by Seki \cite{Seki2013} (2-{\pats} was claimed \NP-hard in \cite{MaLombardi2009}, but the proof was incorrect). 
Johnsen, Kao, and Seki strengthened the result up to the \NP-hardness of 29-{\pats} with $47/46 \approx 1.022$ being an approximation ratio unachievable in polynomial time, unless ${\bf P} = \NP$ \cite{JohnsenKaoSeki2013}. 

Kari et al. have recently proposed a computer-assisted proof for the \NP-hardness of 2-{\pats} \cite{KaKoMePaSe2014}. 
As a corollary of their proof, the approximation ratio $14/13 \approx 1.077$ is proven polynomial-time unachievable. 
Computer-assisted proofs are widely accepted these days, producing a number of results of practical value (see, e.g., \cite{KonevLisitsa2014, Marchal2011}). 
The proof for 2-{\pats} has been just verified in a different environment (computer architecture, programming language, etc.) from the first verification, and hence, it is very likely to be correct. 
Although the total computing time is almost 1-year, their programs are so massively parallelized that the actual verification takes just several days. 
This should be sufficient and shifts the practical interest onto the study of approximation algorithms. 

The aim of this paper is, nevertheless, to propose a manually-checkable proof of the \NP-hardness of 11-{\pats}. 
Beyond the aesthetic concerns about computer-assisted proofs (see quotations from Paul Erd\"{o}s in \cite{Hoffman1998}), manually-checkable proofs help us to obtain profound insights and understanding of the problem. 
This is a compilation of a series of works on the \NP-hardness of {\pats} \cite{CzeizlerPopa2013,JohnsenKaoSeki2013,KariKopeckiSeki2014,Seki2013}. 

\begin{theorem}\label{thm:11PATS_NPhard}
	11-{\pats} is \NP-hard. 
\end{theorem}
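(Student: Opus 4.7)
The plan is to establish NP-hardness by a polynomial-time reduction from \oneinthreeSAT, following the blueprint of the earlier 60-\pats{} and 29-\pats{} reductions in \cite{Seki2013,JohnsenKaoSeki2013} but with aggressive color-reuse across gadgets to squeeze the palette down to 11. Given an instance with $n$ variables and $m$ clauses, I would construct in polynomial time a rectangular pattern $P$ and a threshold $N$ (polynomial in $n,m$) such that the minimum RTAS tile set that uniquely self-assembles $P$ has exactly $N$ tile types if the formula is satisfiable, and strictly more than $N$ tile types otherwise. The seed is the standard L-shape; the rectangle is partitioned horizontally into bands, each band acting as a ``guess-and-verify'' unit for one clause.

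For the construction, I would lay out three kinds of horizontal strips stacked to form $P$. A \emph{variable strip} placed near the bottom encodes a truth assignment column-by-column, using blue tiles carrying $\{\tt T,F\}$-glues on all four sides (matching the \texttt{bluetile} family defined in the preamble) so that a north-going ``truth wire'' is forced up every variable column. Above it, for each clause I add a \emph{clause strip} of constant height that reads the three incident wires from below. Inside this strip I route the three truth values horizontally using the $\{\tt t,f,s,n,v,c\}$ micro-glues from the \texttt{white}/\texttt{black}/\texttt{yellow} families so that the ``\texttt{Sat}'' tile (the unique green color) can appear exactly once per clause, while any unsatisfying pattern of incoming truth values forces a distinct ``\texttt{CE}'' (counterexample) tile to appear in that row. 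The 11 colors I target are: blue (assignment), white and black (value-propagation), gray (\texttt{Init}), red (\texttt{F}-output), a second blue shade (\texttt{T}-output), yellow (separator), green (\texttt{Sat}), and up to three auxiliary colors used by \texttt{CE}-tiles and the triangle-marked diagonal tiles (\texttt{DGNL}) that synchronize the variable and clause strips.

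The forward direction is routine: from a satisfying 1-in-3 assignment I exhibit an RTAS with exactly $N$ tile types that produces $P$, one tile for each distinct glue-neighborhood that actually occurs; because each clause has exactly one true literal, only the ``\texttt{Sat}'' tile (and never \texttt{CE}) appears in the clause strips. For the converse, given any RTAS of size $\le N$ assembling $P$, I read off the tiles placed in the variable strip to extract a candidate assignment, and then argue that in every clause strip the forced propagation of glues makes the ``\texttt{Sat}'' tile usable only when exactly one of the three literals is true; otherwise two distinct \texttt{CE}-tile types (or an extra truth-propagation tile type) must be introduced, pushing the total above $N$.

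The main obstacle, and the whole reason the color count drops from 29 to 11, is the \emph{color-reuse bookkeeping}: once the same color is used by both a variable tile and a clause tile, or shared between a \texttt{T}-wire and a \texttt{Sat}-tile, one must verify that no ``cross-gadget'' glue-compatible tile can substitute for an intended one and thereby fake a satisfying assembly of an unsatisfiable instance, or (conversely) inflate the count on a satisfiable one. Ruling this out requires an exhaustive case analysis of all glue-compatible neighborhoods at every interior column of every strip, exploiting the north/east asymmetric micro-glue alphabets encoded in the \texttt{gln*}/\texttt{glw*}/\texttt{gls*}/\texttt{gle*} macros. I expect this case analysis, together with bookkeeping the exact threshold $N$ as a sum over variables, clauses, wires, and synchronization tiles, to be the bulk of the paper; the reduction itself is short, but its correctness at the 11-color boundary is tight and must be checked gadget-by-gadget.
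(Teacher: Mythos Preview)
Your plan diverges from the paper in two structural ways, and the second is a genuine gap.

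First, the threshold. You take $N$ polynomial in $n,m$; the paper takes $N=21$, a constant independent of $\phi$. The paper fixes in advance a single 21-type, 11-color set $T_{\rm eval}$ (Figure~\ref{fig:tileset}) and builds $P(\phi)$ so that (i) if $\phi$ is satisfiable, some seed lets $T_{\rm eval}$ assemble $P(\phi)$, and (ii) any $\le 21$-type solution must literally be $T_{\rm eval}$ up to glue renaming. The assignment is never visible in the pattern at all; it lives only in the seed glues and is propagated invisibly through monochromatic cyan regions. Your ``variable strip encodes a truth assignment'' language suggests you have not internalised that $P$ is fixed before any assignment is chosen.

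Second, and this is the real gap: your converse direction assumes that any size-$\le N$ RTAS assembling $P$ uses your intended tile types in your intended roles, so that ``reading off the variable strip'' yields a well-defined assignment and the clause-strip glue propagation behaves as designed. Nothing enforces this. An adversarial RTAS may label its 11-colored tiles with a completely different glue structure, route information along different paths, and still assemble $P$ without any tile playing the role of a ``truth wire'' at all. Your proposed remedy---an exhaustive neighborhood case analysis---is exactly the hard part, and you have not started it. The paper handles this by a dedicated, $\phi$-independent subpattern {\tt GADGET} whose sole purpose is to prove Property~\ref{propty:isomorphic}: any $\le 21$-type set assembling {\tt GADGET} is isomorphic to $T_{\rm eval}$. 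The bulk of the paper (Lemmas~\ref{lem:exactly2}--\ref{lem:lb3} and the long color/glue-assignment argument of Section~3.3) is devoted to proving this forcing property. Only after the tile set is pinned down does the {\tt CIRCUIT} part encode $\phi$, and then the converse is immediate. Without an analogous forcing mechanism, your counting argument (``otherwise two distinct {\tt CE}-tile types must be introduced'') has no foundation: you cannot bound the adversary's tile count from below if you do not first constrain what its tiles look like.
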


	\section{Rectilinear TAS and constant-colored {\pats}}

A {\it (rectangular) pattern $P$} (of width $w$ and height $h$) is a function from the rectangular domain $\{(x, y) \mid x \in \{0, 1, \ldots, w-1\}, y \in \{0, 1, \ldots, h-1\}\}$ to $\mathbb{N}$ (the set of color indices, or color codes). 
We denote the codomain of this pattern by ${\rm color}(P)$, that is, any color in ${\rm color}(P)$ appears at least once on $P$. 
We say that $P$ is {\it $k$-colored} if $|{\rm color}(P)| \le k$.

The self-assembly of binary counter (Figure~\ref{fig:SA_counter}) illustrates how a rectilinear TAS works. 
Let us first introduce necessary notation about the rectilinear TAS. 
A {\it tile type} is a square of some color whose four sides are {\it labeled}. 
Being assumed not to be rotatable, a tile type is identified by its color and four labels read in the counter-clockwise order starting at north (\north); for instance, the second black tile type in Figure~\ref{fig:SA_counter} (Left) is (1, 1, 0, 0, black). 
Given a tile type $t$ and a direction $d \in \{\north, \west, \south, \east\}$, $t(d)$ denotes the label at the side $d$. 
A {\it rectilinear TAS} (RTAS, in short) is a pair $\mathcal{T} = (T, \sigma_L)$ of a set $T$ of tile types and an L-shape seed $\sigma_L$ of width $w$ and height $h$ for some $w, h \ge 1$. 
As shown in Figure~\ref{fig:SA_counter}, the L-shape seed $\sigma_L$ is an assembly of tiles not included in $T$ so that its $x$-axis is provided with north labels and its $y$-axis is provided with east labels. 
Its domain is assumed to be $\{(0, 0)\} \cup \{(x, 0) \mid 1 \le x \le w\} \cup \{(0, y) \mid 1 \le y \le h\}$. 
The RTAS assumes an infinite supply of copies of tile types in $T$, each copy being referred to as a {\it tile}. 
Using the copies, it tiles the domain $\{(x, y) \mid 1 \le x \le w, 1 \le y \le h\}$ delimited by the seed, which is delimited by the L-shape seed, according to the following rule: 

\vspace*{3mm}

\begin{description}
\item[RTAS's tiling rule:] A tile can attach at a position $(x, y)$ if and only if its west label matches the east label of the tile on $(x-1, y)$ and its south label matches the north label of the tile on $(x, y-1)$. 
\end{description}

\vspace*{3mm}

\noindent
This rule suggests that a position does not become attachable until its west and south neighbor positions are tiled. 
At the initial time point, therefore, the sole attachable position is (1, 1). 
See the L-shape seed in Figure~\ref{fig:SA_counter} (Right); a tile of type (1, 1, 0, 0, black) can attach at (1, 1), while no tile of the other three types can attach, due to label-mismatching. 
The attachment makes the two positions (1, 2) and (2,~1) attachable. 
In this manner, the tiling proceeds from south-west to north-east {\it rectilinearly} until no attachable position is left. 
Since tile types are colored, if every position in the delimited domain has been tiled on the attachment termination, then the tiling shows a rectangular pattern and we consider it as an output of the RTAS and call it a {\it terminal pattern}. 
The $5 \times 9$ binary counter pattern in Figure~\ref{fig:SA_counter} is terminal. 
When an RTAS admits a unique terminal pattern $P$, we say that it {\it uniquely self-assembles the pattern $P$}. 

In this binary counter example, each attachable position admits a {\it unique} tile type whose copy (tile) can attach there, and we call this property directedness of RTAS. 
Formally, an RTAS $(T, \sigma_L)$ is {\it directed} if for any distinct $t_1, t_2 \in T$, either $t_1(\west) \neq t_2(\west)$ or $t_1(\south) \neq t_2(\south)$ holds (the directedness of RTAS was originally defined in a different but equivalent way). 
For technical convenience, we also say that such a tile type set $T$ is {\it directed}. 
It should be now clear that a directed RTAS uniquely self-assembles a pattern as long as it can tile the plain delimited by its seed. 

The {\it pattern self-assembly tile set synthesis} (\pats), proposed by Ma and Lombardi \cite{MaLombardi2008}, aims at computing the minimum size directed\footnote{Unlike the original form, the solution to {\pats} is required to be directed here, but it does not change the problem as the minimum RTAS is always directed \cite{GoLeCzOr2014}.} RTAS that uniquely self-assembles a given rectangular color pattern. 
The size of an RTAS $(T, \sigma_L)$ is measured solely by the cardinality of $T$, and is independent of the seed. 
By restricting the number of colors allowed to draw input patterns, a practically-meaningful subproblem of {\pats} is formulated as follows. 

\begin{definition}[\cite{Seki2013}]
	{\sc $k$-colored Pats} ($k$-{\pats}) \\
	\begin{tabular}{ll}
	{\sc Given}: & a $k$-colored pattern $P$ \\
	{\sc Find}: & a smallest directed RTAS that uniquely self-assembles $P$
	\end{tabular}
\end{definition}

	\section{Proof of Theorem~\ref{thm:11PATS_NPhard}}

Let us propose a polynomial-time reduction from monotone $\oneinthreeSAT$ to 11-{\pats} in the rest of this paper. 
$\oneinthreeSAT$ is a variant of $3\sat$ introduced by Schaefer \cite{Schaefer1978}. 
Its input is the same as the input of $3\sat$, while its decision is yes if and only if there exists an assignment that makes {\it exactly one} (compare to ``at least one'' in 3{\sat}) of the three literals in each clause true. 
It is \NP-hard \cite{Schaefer1978}, and it remains \NP-hard even under the restriction that no literal is negated; this restricted problem is called the {\it monotone $\oneinthreeSAT$}. 
An instance of monotone $\oneinthreeSAT$ is a conjunctive formula of clauses each of which consists of exactly three {\it positive} literals, i.e., variables. 

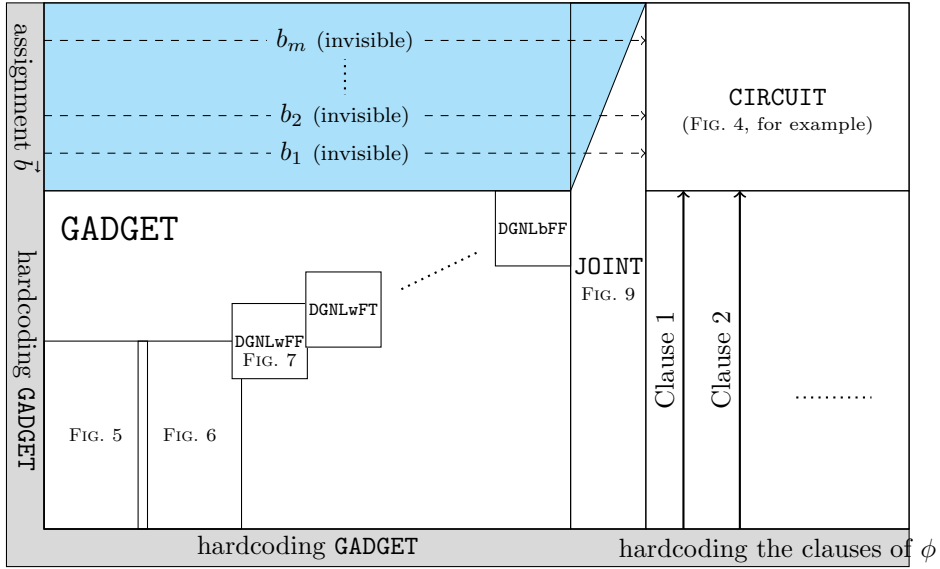
\begin{figure}[tb]
\begin{center}
\begin{tikzpicture}

\draw [fill=gray!30] (0, 0) -- (0, 7.5) -- (0.5, 7.5) -- (0.5, 0.5) -- (5.5, 0.5) -- (12, 0.5) -- (12, 0) -- cycle; 
\node () at (4, 0.25) {hardcoding {\tt GADGET}}; 
\node () at (0.25, 2.75) {\rotatebox{270}{hardcoding {\tt GADGET}}};
\node () at (0.25, 6.25) {\rotatebox{270}{assignment $\vec{b}$}};

\draw [fill=cyan!30] (0.5, 5) -- (7.5, 5) -- (8.5, 7.5) -- (0.5, 7.5) -- cycle; 
\draw (0.5, 0.5) rectangle (7.5, 5); \node at (1.5, 4.5) {\Large {\tt GADGET}};  
\draw (8.5, 5) rectangle node {{\tt CIRCUIT}} (12, 7.5); \node at (10.25, 5.875) { {\scriptsize ({\sc Fig.}~\ref{fig:circuit}, for example)}};  
\draw (0.5, 0.5) rectangle (12, 7.5); 
\draw (8.5, 0.5) -- node[below] {hardcoding the clauses of $\phi$} (12, 0.5) -- (12, 5) -- (8.5, 5) -- cycle; 
\draw (7.5, 0.5) rectangle node {\tt JOINT} (8.5, 7.5); \node at (8, 3.625) {\scriptsize {\sc Fig.}~\ref{fig:joint}}; 

\draw [->,dashed] (0.5, 7) -- node [fill=cyan!30] {$b_m$ {\footnotesize (invisible)}} (8.5, 7); 
\draw [dotted,thick] (4.5, 6.25) -- (4.5, 6.75);
\draw [->,dashed] (0.5, 6) -- node [fill=cyan!30] {$b_2$ {\footnotesize (invisible)}} (8.5, 6); 
\draw [->,dashed] (0.5, 5.5) -- node [fill=cyan!30] {$b_1$ {\footnotesize (invisible)}} (8.5, 5.5); 

\draw [->,thick] (9, 0.5) -- node [left] {\rotatebox{90}{Clause 1}} (9, 5);
\draw [->,thick] (9.75, 0.5) -- node [left] {\rotatebox{90}{Clause 2}} (9.75, 5);
\draw [dotted,thick] (10.5, 2.25) -- (11.5, 2.25);

\draw (0.5, 0.5) rectangle node {\scriptsize {\sc Fig.}~\ref{fig:gadget1}} (1.875, 3); 
\draw (1.75, 0.5) rectangle node {\scriptsize {\sc Fig.}~\ref{fig:gadget2}} (3.125, 3); 

\draw[fill=white] (3, 2.5) rectangle node {\scriptsize {\tt DGNLwFF}} (4, 3.5); \node at (3.5, 2.75) {\scriptsize {\sc Fig.}~\ref{fig:gadget_DGNLwFF}};
\draw[fill=white] (3.98, 2.92) rectangle node {\scriptsize {\tt DGNLwFT}} (4.98, 3.92); 
\draw (6.5, 4) rectangle node {\scriptsize {\tt DGNLbFF}} (7.5, 5); 

\draw[dotted, thick] (5.25, 3.69) -- (6.3, 4.2);  

\end{tikzpicture}
\end{center}
\caption{
	A blueprint of the pattern $P(\phi)$, to which a given monotone $\oneinthreeSAT$ instance $\phi$ is reduced.
	The Boolean-value assignment $\vec{b} = (b_1, b_2, \ldots, b_m)$ to $v_1, v_2, \ldots, v_m$ is invisible in the sense that the pattern $P(\phi)$ gives no information about it. 
	In contrast, the clauses are colorcoded on the pattern. 
}
\label{fig:blueprint}
\end{figure}

\begin{figure}[tb]
\begin{center}
\includegraphics{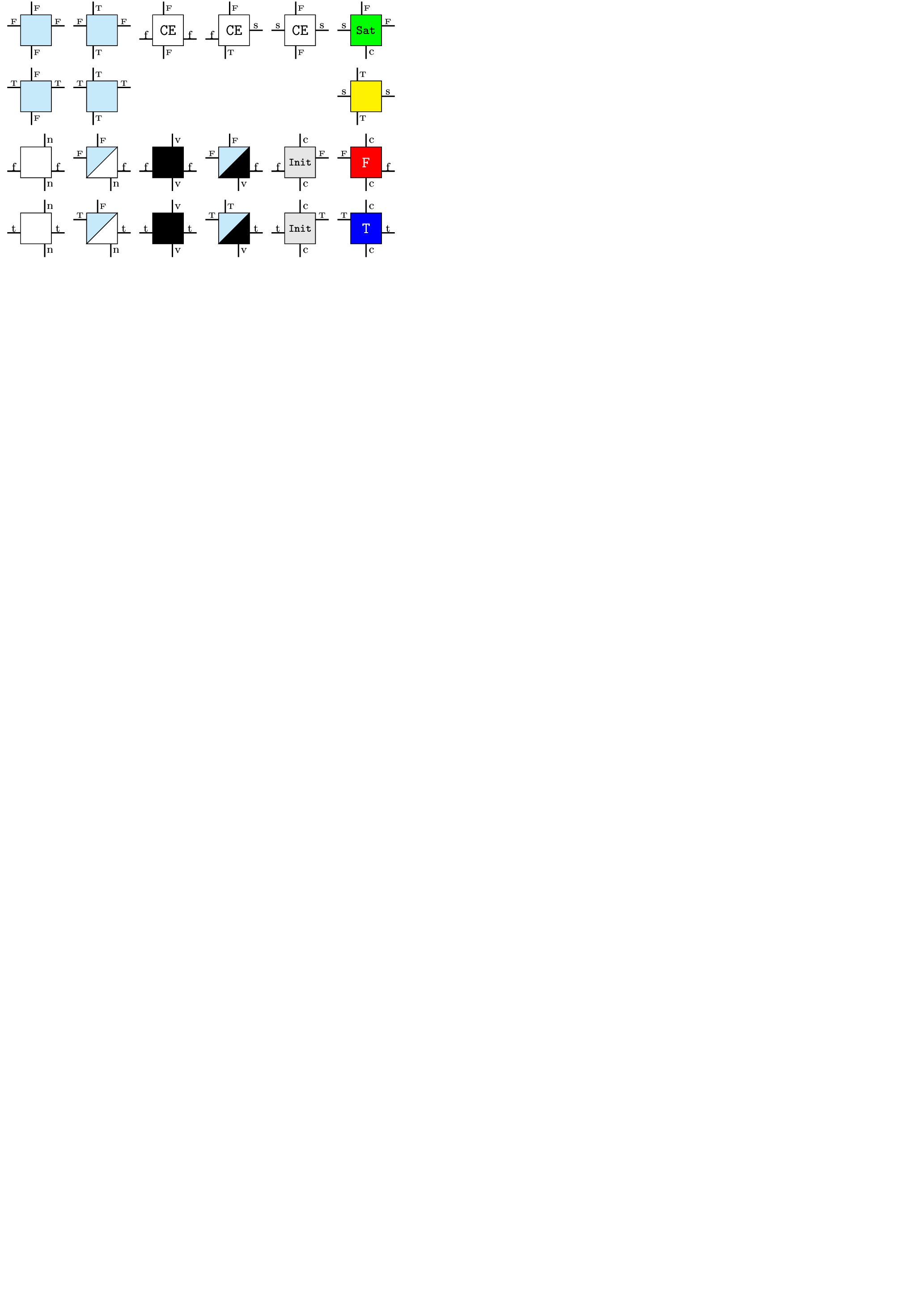}
\end{center}
\caption{Set $T_{\rm eval}$ of 21 tile types of 11 colors: cyan (4), {\tt CE} (3), white (2), black (2), {\tt DGNL}-white (2), {\tt DGNL}-black (2), {\tt Init} (2), {\tt Sat} (1), yellow (1), red (1), and blue (1), where the numbers in parentheses indicate how many tile types in $T_{\rm val}$ are drawn with corresponding colors.}
\label{fig:tileset}
\end{figure}

The set $T_{\rm eval}$ of 21 tile types, presented in Figure~\ref{fig:tileset}, is essential in our reduction. 
It is designed in such a way that, starting from an L-shape seed encoding a given monotone $\oneinthreeSAT$ instance $\phi$ over $m$ variables $v_1, v_2, \ldots, v_m$ and a Boolean-value assignment $\vec{b} = (b_1, b_2, \ldots, b_m)$ in a predetermined format on its glues, a directed RTAS with this tile type set evaluates $\phi$ according to $\vec{b}$ without revealing even a hint of $\vec{b}$ in the resulting pattern. 
We will explain this evaluation in detail in Section~\ref{subsec:circuit}. 

Our reduction converts a given instance $\phi$ of monotone $\oneinthreeSAT$ in an 11-colored rectangular pattern $P(\phi)$ consisting of primary and secondary subpatterns, as blueprinted in Figure~\ref{fig:blueprint}. 
The primary subpattern {\tt CIRCUIT} is a snapshot for $\phi$ to be thus validated (evaluated to be true) by tiles in $T_{\rm eval}$ according to some satisfying assignment $\vec{b}$. 
Needless to say, unless $\phi$ is satisfiable, the assignment is imaginary. 
The secondary subpattern {\tt GADGET} plays a critical auxiliary role in the reduction due to its following property: 

\begin{property}\label{propty:isomorphic}
	If a directed RTAS $(T, \sigma_L)$ with some set $T$ of at most 21 tile types uniquely self-assembles a pattern including {\tt GADGET}, then $T$ must be isomorphic to $T_{\rm eval}$ (modulo glue renaming). 
	Therefore, no set of strictly less than 21 tile types can be employed to uniquely self-assemble the pattern. 
\end{property}

{\tt GADGET} being included in the reduced pattern $P(\phi)$, Property~\ref{propty:isomorphic} forces a directed RTAS to employ $T_{\rm eval}$ in order to uniquely self-assemble $P(\phi)$, unless 22 or more tile types are available. 
Note that tiles in $T_{\rm eval}$ require an assignment satisfying $\phi$ to assemble the primary subpattern {\tt CIRCUIT} of $P(\phi)$, 
Consequently, $\phi$ is satisfiable if and only if $P(\phi)$ is uniquely self-assembled by a directed RTAS with at most 21 tile types. 

Having described informally how the reduction works. we will now explain it in detail in the rest of this paper. 

	\subsection{{\tt CIRCUIT}: validation of monotone $\oneinthreeSAT$}
	\label{subsec:circuit}

Using an example should be the easiest way to understand how, using tiles in $T_{\rm eval}$, a directed RTAS evaluates a monotone $\oneinthreeSAT$ instance according to a given assignment and what pattern will emerge as a result when the assignment satisfies the instance. 
Essentially, this pattern is {\tt CIRCUIT}. 

\begin{figure}[tb]
\begin{center}

\scalebox{0.57}{
\beginpgfgraphicnamed{circuit}
\begin{tikzpicture}
	\input{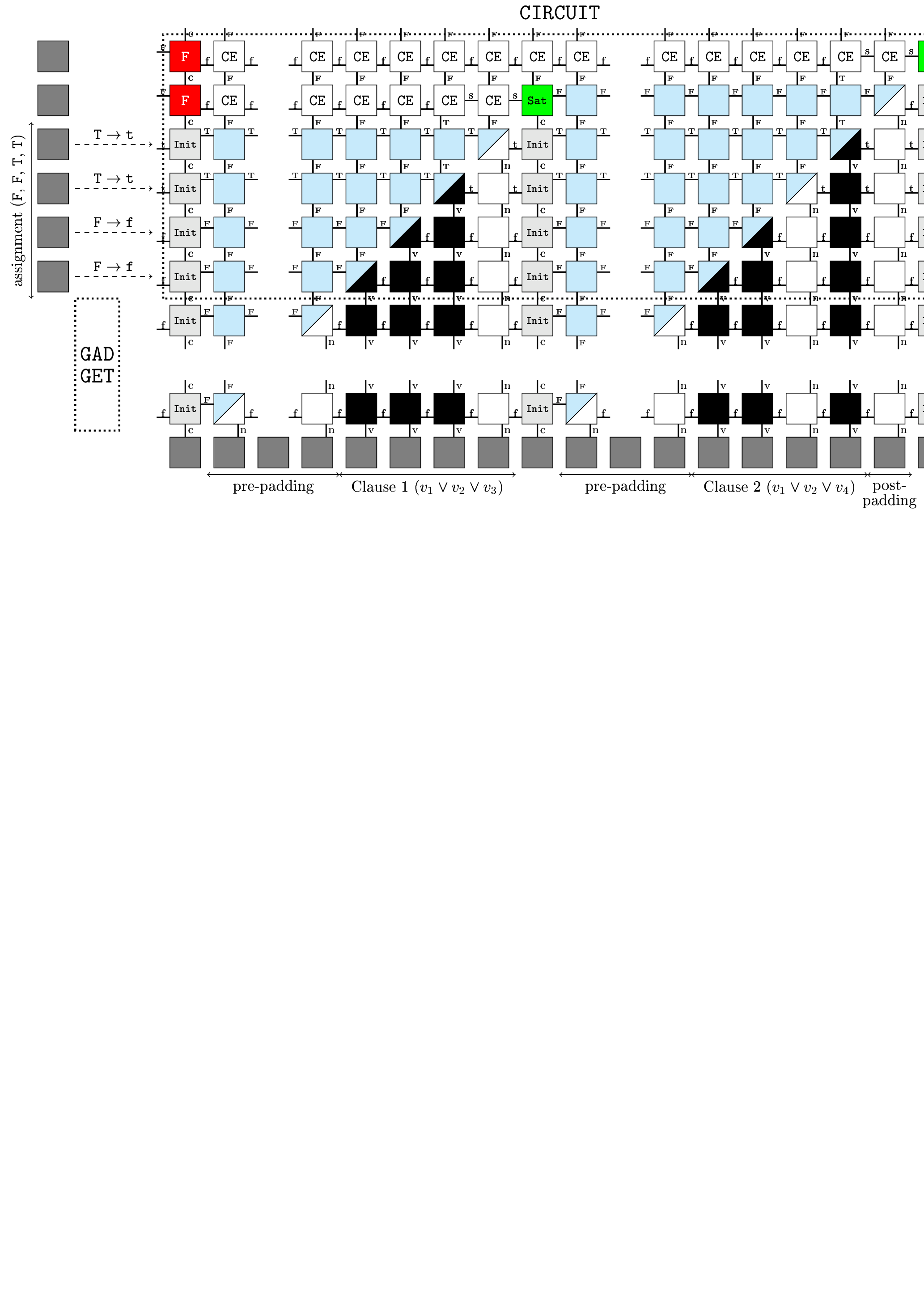}

	\foreach \x in {1, 2, ..., 18} {\node[tile,fill=gray] at (\x, 0) {};}
	\foreach \y in {4, 5, ..., 9} {\node[tile,fill=gray] at (-2, \y) {};}

	\draw[dashed, ->] (-1.5, 7) -- node[above] {${\tt T} \to {\tt t}$} (0.25,7); 
	\draw[dashed, ->] (-1.5, 6) -- node[above] {${\tt T} \to {\tt t}$} (0.25,6); 
	\draw[dashed, ->] (-1.5, 5) -- node[above] {${\tt F} \to {\tt f}$} (0.25,5); 
	\draw[dashed, ->] (-1.5, 4) -- node[above] {${\tt F} \to {\tt f}$} (0.25,4); 

	\draw[sloped,->] (-2.5, 3.5) -- node[above] {assignment ({\tt F}, {\tt F}, {\tt T}, {\tt T})} (-2.5, 7.5); \draw[->] (-2.5, 7.5) -- (-2.5, 3.5); 

	\draw[dotted, very thick] (-1.5, 0.5) rectangle (-0.5, 3.5); 
	\node at (-1, 2.25) {\Large {\tt GAD}}; 
	\node at (-1, 1.75) {\Large {\tt GET}}; 

	\draw[dotted, very thick] (0.5, 3.5) rectangle (18.5, 9.5); \node at (9.5, 10) {\Large {\tt CIRCUIT}}; 

	\draw[sloped,->] (1.5, -0.5) -- node[below] {pre-padding} (4.5, -0.5); \draw[->] (4.5, -0.5) -- (1.5, -0.5); 
	\draw[sloped,->] (4.5, -0.5) -- node[below] {Clause 1 $(v_1 \vee v_2 \vee v_3)$} (8.5, -0.5); \draw[->] (8.5, -0.5) -- (4.5, -0.5); 
	\draw[sloped,->] (9.5, -0.5) -- node[below] {pre-padding} (12.5, -0.5); \draw[->] (12.5, -0.5) -- (9.5, -0.5);  
	\draw[sloped,->] (12.5, -0.5) -- node[below] {Clause 2 $(v_1 \vee v_2 \vee v_4)$} (16.5, -0.5); \draw[->] (16.5, -0.5) -- (12.5, -0.5); 
	\draw[sloped,->] (16.5, -0.5) -- node[below] {post-} (17.5, -0.5); \draw[->] (17.5, -0.5) -- (16.5, -0.5); 
	\node at (17, -1.1) {padding};
\end{tikzpicture}
\endpgfgraphicnamed
}

\end{center}
\caption{
	Starting from the L-shape seed, indicated by gray tiles, that encodes the instance $\phi = (v_1 \vee v_2 \vee v_3) \wedge (v_1 \vee v_2 \vee v_4)$ and an assignment $\vec{b}=(\false, \false, \true, \true)$, a directed RTAS evaluates $\phi$ according to $\vec{b}$ using tiles in $T_{\rm eval}$.
	The assembly results in the subpattern {\tt CIRCUIT} to the northeast of {\tt GADGET} on $P(\phi)$. 
}
\label{fig:circuit}
\end{figure}

Consider a formula $\phi = (v_1 \vee v_2 \vee v_3) \wedge (v_1 \vee v_2 \vee v_4)$ and an assignment $\vec{b} = (\false, \false, \true, \true)$, which satisfies $\phi$ in the $\oneinthreeSAT$ sense\footnote{In contrast, $(\true, \false, \true, \false)$ does not satisfy $\phi$ in the $\oneinthreeSAT$ sense because it satisfies more than one literal of the first clause.}. 
See Figure~\ref{fig:circuit} for the evaluation of $\phi$ according to $\vec{b}$ by the RTAS. 

The L-shape seed is the interface to input $\phi$ and $\vec{b}$ into the RTAS. 
Clauses of $\phi$ are written on the seed's $x$-axis as a sequence of glues {\tt v} (variable in clause), {\tt n} (variable not in clause\footnote{{\tt n} does not mean a negated variable. Recall, monotone implies variables never appear negated in clauses}), and {\tt c}. 
The clauses $(v_1 \vee v_2 \vee v_3)$ and $(v_1 \vee v_2 \vee v_4)$ of $\phi$, for instance, are first converted into ${\tt vvvn}$ and ${\tt vvnv}$, respectively. 
We then pre-pad each of these encodings from the left by $h$ {\tt n} glues so that {\tt CIRCUIT} is to emerge at the height $h$. 
Later, $h$ will be set to the height of {\tt GADGET}. 
We finally post-pad them with incremental number of {\tt n} glues so that a clause is evaluated on the row just above those on which previous clauses were evaluated. 
Connecting them by {\tt c}'s results in ${\tt c} {\tt \underline{n}}^h {\tt vvvn \underline{n}^0 c} {\tt \underline{n}}^h {\tt vvnv \underline{n}^1 c}$, where {\tt n}'s for padding are underlined. 
This is the encoding of the clauses of $\phi$. 
The assignment $\vec{b}$ is written rather on the seed's $y$-axis as {\tt FFTT} (the assignment to the first variable $v_1$ is at the bottom). 
We post-pad it with as many {\tt F}'s as clauses of $\phi$ like {\tt FFTT}-${\tt F}^2$ for this example. 

Signals {\tt v} and {\tt n}, carrying information about the membership of variables in clauses, are propagated northward through black and white tiles (2 types each), respectively. 
The clauses become visible in this way. 
Cyan tiles (4 types) propagate signals ({\tt F}/{\tt T}) horizontally as well as vertically. 
The assignment is thus propagated horizontally over {\tt GADGET} by cyan tiles, and lower-cased ($\false/\true \to {\tt f}/{\tt t}$) when passing the joint between {\tt GADGET} and {\tt CIRCUIT} (see Figure~\ref{fig:joint}). 

At the crossover of these signals, variables are evaluated diagonally by {\tt DGNL}-black tiles (2 types); they reflect the signal from the west (assignment) to the north like a mirror. 
The three signals thus evaluated per clause are propagated to the north via cyan tiles and then {\tt CE} tiles (3 types) evaluate these signals. 
At an encounter with {\tt T} signal, {\tt CE} tiles change the evaluation from {\tt f} to {\tt s} (satisfied), and without another encounter with {\tt T} signal, {\tt CE} tiles propagate the evaluation to the east until it is validated by a {\tt Sat} tile at the top of {\tt Init} column, which initializes the assignment signals for the validation of the next clause. 
The post-padding enables clauses to be evaluated on different rows. 

See Figure~\ref{fig:circuit} for the emerging pattern {\tt CIRCUIT}. 
What has to be observed is the invisibility of the assignment $\vec{b}$ encoded in the seed on the pattern. 
The assignment can be retrieved only by examining its underlying assembly, and not by the colors of its pattern. 
In fact, from two L-shape seeds encoding different satisfying assignments in the above-mentioned format, tiles in $T_{\rm eval}$ assemble the same pattern {\tt CIRCUIT}. 
It might be also worthwhile to note that starting from the seed which encodes an unsatisfying assignment, the RTAS cannot complete any rectangular pattern due to the lack of the {\tt UNSAT} counterpart of {\tt SAT} tile type or the {\tt CE} tile type receiving {\tt s} from the west and {\tt T} from the south to handle a second true literal in $T_{\rm eval}$. 

{\tt CIRCUIT} involves just 9 colors: cyan, {\tt CE}, white, {\tt DGNL}-white, black, {\tt DGNL}-black, {\tt Init}, red (F), and {\tt Sat}. 
Yellow and blue (T) appear on the secondary subpattern {\tt GADGET} so that the whole pattern $P(\phi)$ is 11-colored. 

	\subsection{Secondary subpattern {\tt GADGET}}
	\label{subsec:gadget}

We have seen that if $\phi$ is satisfiable, then a directed RTAS can self-assemble {\tt CIRCUIT} using tiles in $T_{\rm eval}$. 
In Figures~\ref{fig:gadget1}-\ref{fig:joint}, we visualize how tiles in $T_{\rm eval}$ self-assemble other parts of $P(\phi)$ (see in Figure~\ref{fig:blueprint} how they are integrated into $P(\phi)$). 
These should be enough for us to be convinced that if $\phi$ is satisfiable, then a directed RTAS uniquely self-assembles the pattern $P(\phi)$. 

The converse implication is much harder to be proved: if a directed RTAS with at most 21 tile types uniquely self-assembles $P(\phi)$, then $\phi$ is satisfiable. 
This is primarily because of the huge number of possible tile type sets as well as possible seeds for the RTAS. 
The role of {\tt GADGET} is to make all tile type sets but $T_{\rm eval}$ useless (Property~\ref{propty:isomorphic}). 
That is, with at most 21 tile types available, the RTAS must employ $T_{\rm eval}$ to uniquely self-assemble $P(\phi)$. 
The RTAS still has the freedom of choice in its seed. 
However, at the top of the $y$-axis, the seed's glues must be of the form $({\tt F}/{\tt T})^m {\tt F}^k$ (see Figure~\ref{fig:joint}), where $m$ and $k$ refer to the number of variables and clauses in $\phi$, respectively. 
This is because the west glue of cyan tiles in $T_{\rm eval}$ is either {\tt F} or {\tt T} and that of the red(F) tile type is {\tt F}. 
The choice of the specific glue sequence for $({\tt F}/{\tt T})^m$ among all possible $2^m$ candidates corresponds to an assignment of false/true values to the $m$ variables of $\phi$. 
The above-mentioned invisibility of the assignment allows the RTAS to make this choice, but the chosen one must satisfy $\phi$ in order to assemble {\tt CIRCUIT} of $P(\phi)$ completely. 
Thus, $\phi$ is satisfiable. 
The proof of Theorem~\ref{thm:11PATS_NPhard} is completed in this way. 

\begin{figure}[tbp]
\begin{center}

\scalebox{0.60}{
\beginpgfgraphicnamed{gadget1}
\begin{tikzpicture}

\input{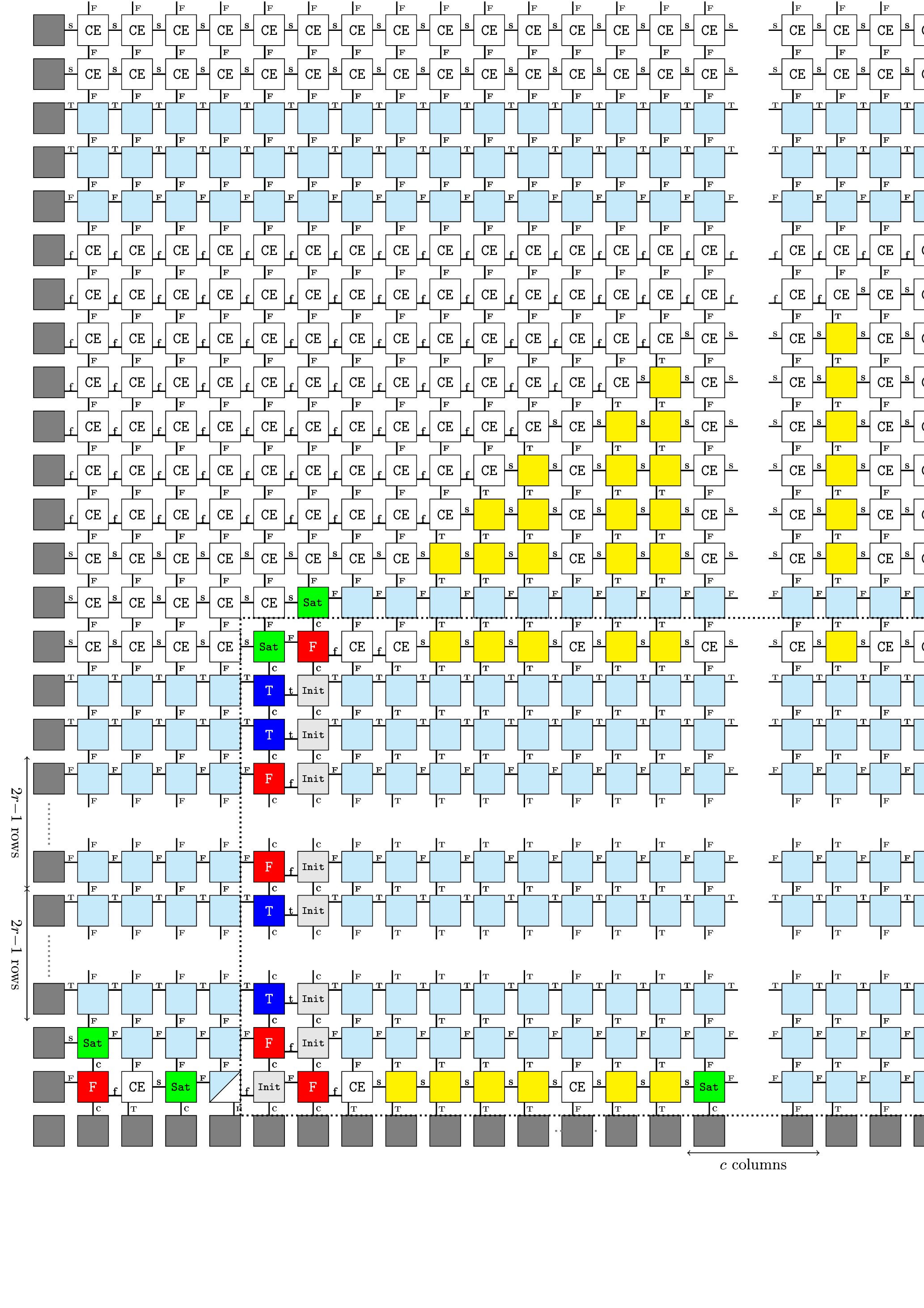}
\seed{15}{3};
\foreach \x in {17, 18, 19, 20, 21} {\node[tile, fill=gray] at (\x, 0) {};}
\node[tile,fill=gray] at (0, 5) {}; \node[tile,fill=gray] at (0, 6) {}; 
\foreach \y in {8, 9, ..., 25} {\node[tile,fill=gray] at (0, \y) {};}
\draw[dotted, very thick, draw=gray] (11.5, 0) -- (12.5, 0); 
\draw[dotted, very thick, draw=gray] (0, 3.5) -- (0, 4.5); 
\draw[dotted, very thick, draw=gray] (0, 6.5) -- (0, 7.5); 

\draw[dotted, very thick] (4.35,0.35) rectangle (21.65,11.65) node [right] {\Large {\tt LB4}};

\draw[->,sloped] (-0.5, 5.5) -- node[below] {$2r{-}1$ rows} (-0.5, 2.5); \draw[->] (-0.5, 2.5) -- (-0.5, 5.5); 
\draw[->,sloped] (-0.5, 8.5) -- node[below] {$2r{-}1$ rows} (-0.5, 5.5); \draw[->] (-0.5, 5.5) -- (-0.5, 8.5); 

\draw[->] (14.5, -0.5) -- node[below] {$c$ columns} (17.5, -0.5); \draw[->] (17.5, -0.5) -- (14.5, -0.5); 

\end{tikzpicture}
\endpgfgraphicnamed
}

\end{center}
\caption{
	The leftmost part of the secondary subpattern {\tt GADGET} of the reduced pattern $P(\phi)$.
	The constants $c$ and $r$, which are independent of $\phi$, are set large enough for the proof's sake. 
}
\label{fig:gadget1}
\end{figure}

\begin{figure}[tbp]
\begin{center}

\scalebox{0.55}{
\beginpgfgraphicnamed{gadget2}
\begin{tikzpicture}

\input{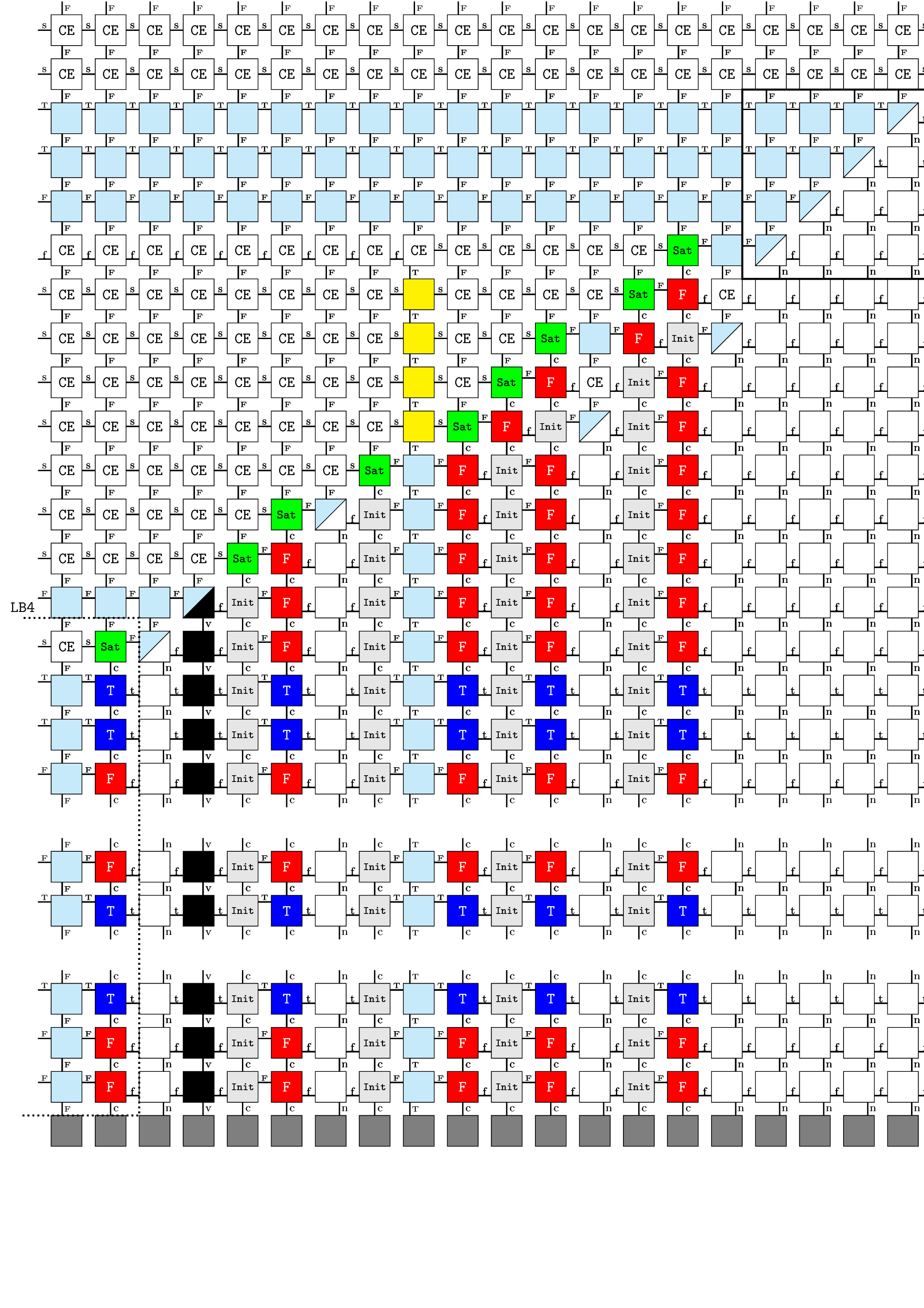}

\foreach \x in {1, 2, ..., 22} {\node[tile,fill=gray] at (\x, 0) {};}

\draw[->,sloped] (22.5, 5.5) -- node[below] {$2r{-}1$ rows} (22.5, 8.5); \draw[->] (22.5, 8.5) -- (22.5, 5.5); 
\draw[->,sloped] (22.5, 2.5) -- node[below] {$2r{-}1$ rows} (22.5, 5.5); \draw[->] (22.5, 5.5) -- (22.5, 2.5); 

\draw [very thick] (16.35, 19.35) rectangle (22.65, 23.65);  

\draw[dotted, very thick] (0,0.35) -- (2.65, 0.35) -- (2.65, 11.65) -- (0, 11.65) node[above] {{\tt LB4}}; 

\end{tikzpicture}
\endpgfgraphicnamed
}

\end{center}
\caption{
	The middle part of {\tt GADGET}.
	In order to clarify that this subpattern is located to the east of the one in Figure~\ref{fig:gadget1} on $P(\phi)$, this figure includes the easternmost two columns in Figure~\ref{fig:gadget1}.
	As for the framed subpattern to the northeast, see the legend of Figure~\ref{fig:gadget_DGNLwFF}. 
}
\label{fig:gadget2}
\end{figure}

\begin{figure}[tbp]
\begin{center}
\beginpgfgraphicnamed{gadget_DGNLwFF}
\scalebox{0.5}{
\begin{tikzpicture}

\input{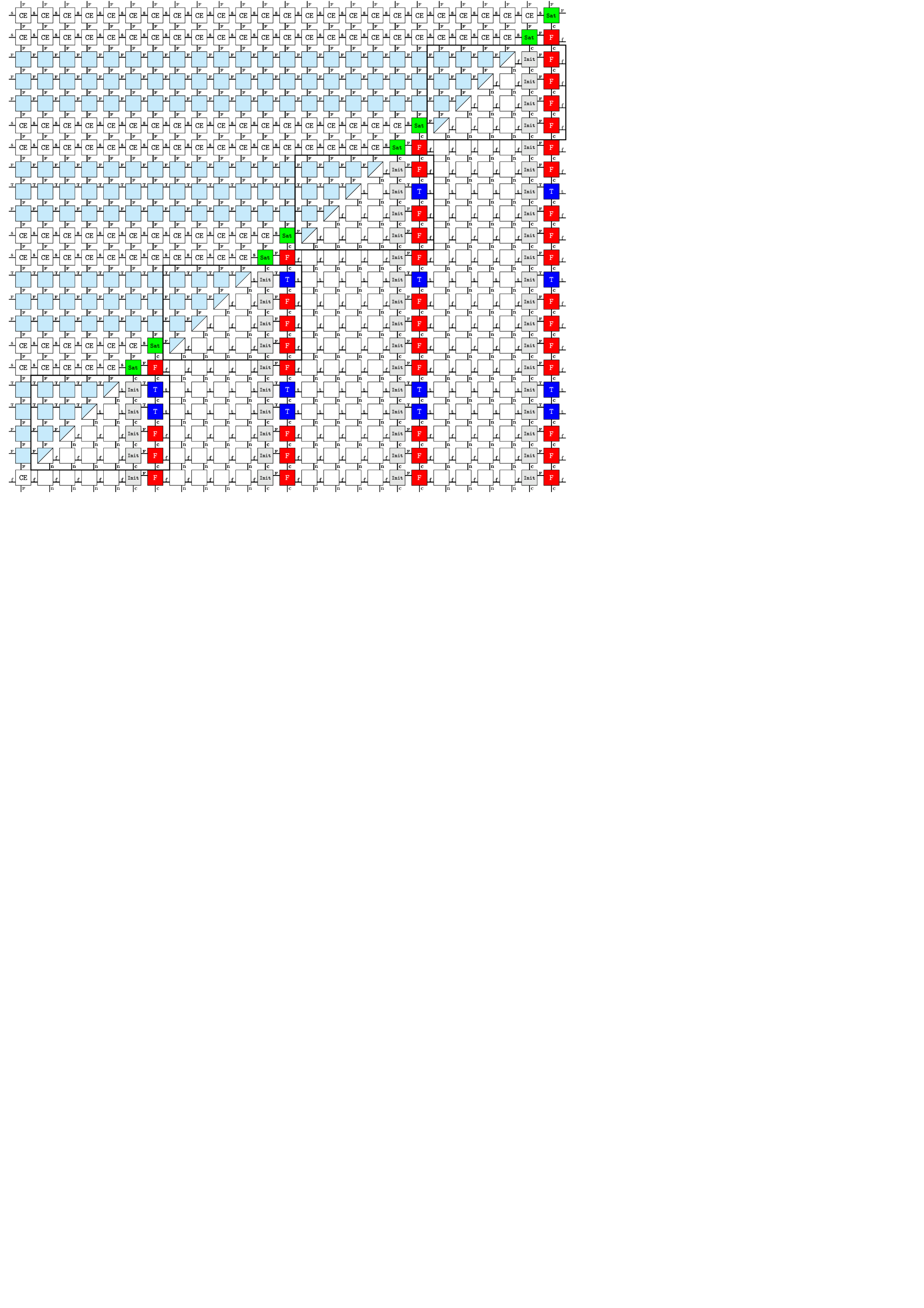}
\draw[very thick] (2.35, 1.35) rectangle (8.65, 5.65); 
\draw[very thick] (8.35, 6.35) rectangle (14.65, 10.65); 
\draw[very thick] (14.35, 11.35) rectangle (20.65, 15.65); 
\draw[very thick] (20.35, 16.35) rectangle (26.65, 20.65); 

\end{tikzpicture}
}
\endpgfgraphicnamed

\end{center}
\caption{
	The first one-eighth of the rightmost part of {\tt GADGET}.
	The four subpatterns framed are the instances of the template described in Figure~\ref{fig:lb2-DGNL} with two red (F) at the bottom. 
	The leftmost one of them is actually the one on the middle part shown in Figure~\ref{fig:gadget2}, and suggests that this is located to the northeast of the middle part. 
}
\label{fig:gadget_DGNLwFF}
\end{figure}

\begin{figure}[tbp]
\begin{center}
\beginpgfgraphicnamed{gadget_DGNLbTT}
\scalebox{0.5}{
\begin{tikzpicture}

\input{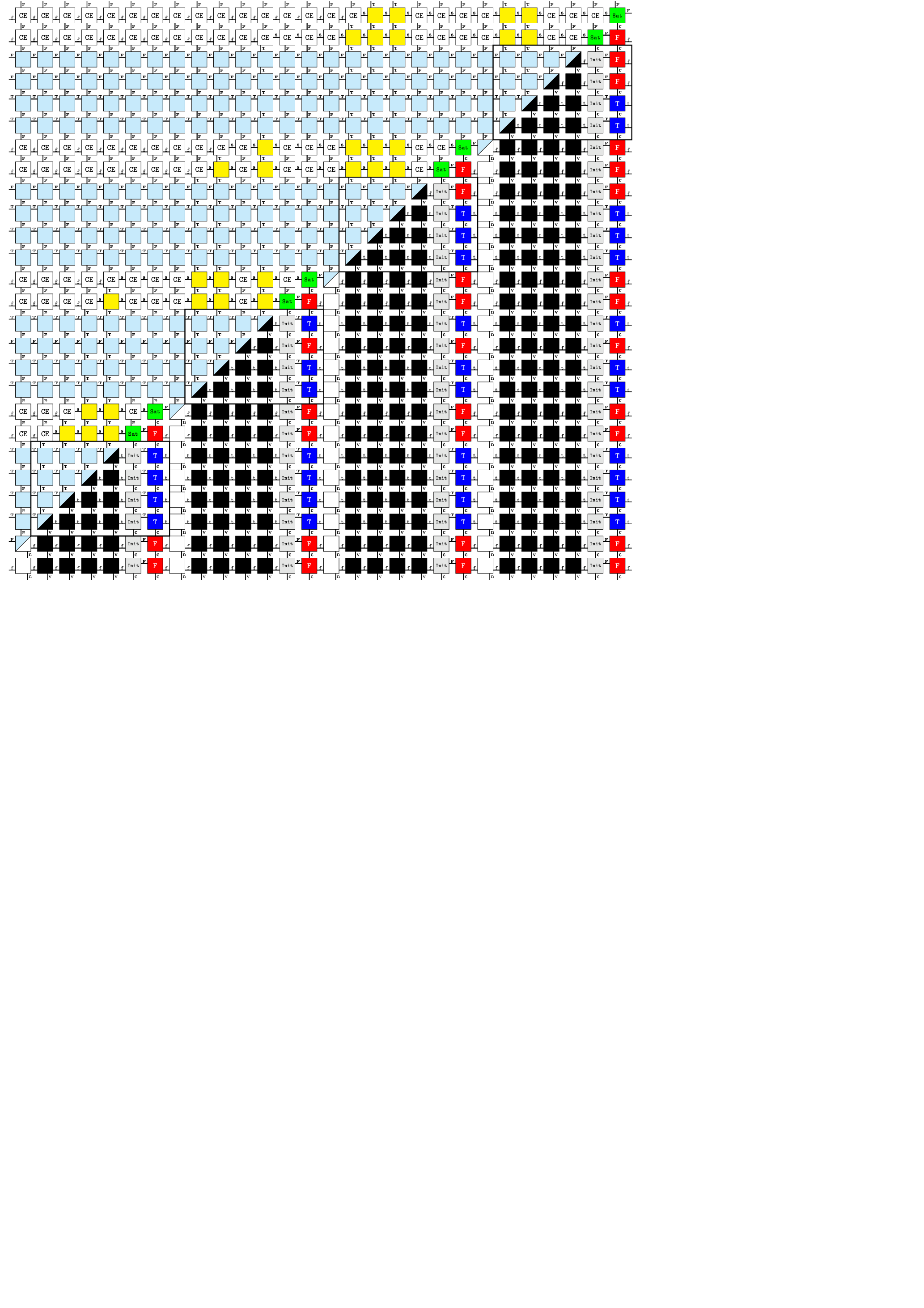}
\draw[very thick] (3.35, 2.35) rectangle (9.65, 6.65); 
\draw[very thick] (10.35, 8.35) rectangle (16.65, 12.65); 
\draw[very thick] (17.35, 14.35) rectangle (23.65, 18.65); 
\draw[very thick] (24.35, 20.35) rectangle (30.65, 24.65); 

\end{tikzpicture}
}
\endpgfgraphicnamed

\end{center}
\caption{
	Another one-eighth of the rightmost part of {\tt GADGET}.
	The four subpatterns framed are the instances of the black analogue of the template described in Figure~\ref{fig:lb2-DGNL} with two blue (T) at the bottom. 
}
\label{fig:gadget_DGNLbTT}
\end{figure}

\begin{figure}[tb]
\begin{center}

\scalebox{0.55}{
\beginpgfgraphicnamed{joint}
\begin{tikzpicture}
	\input{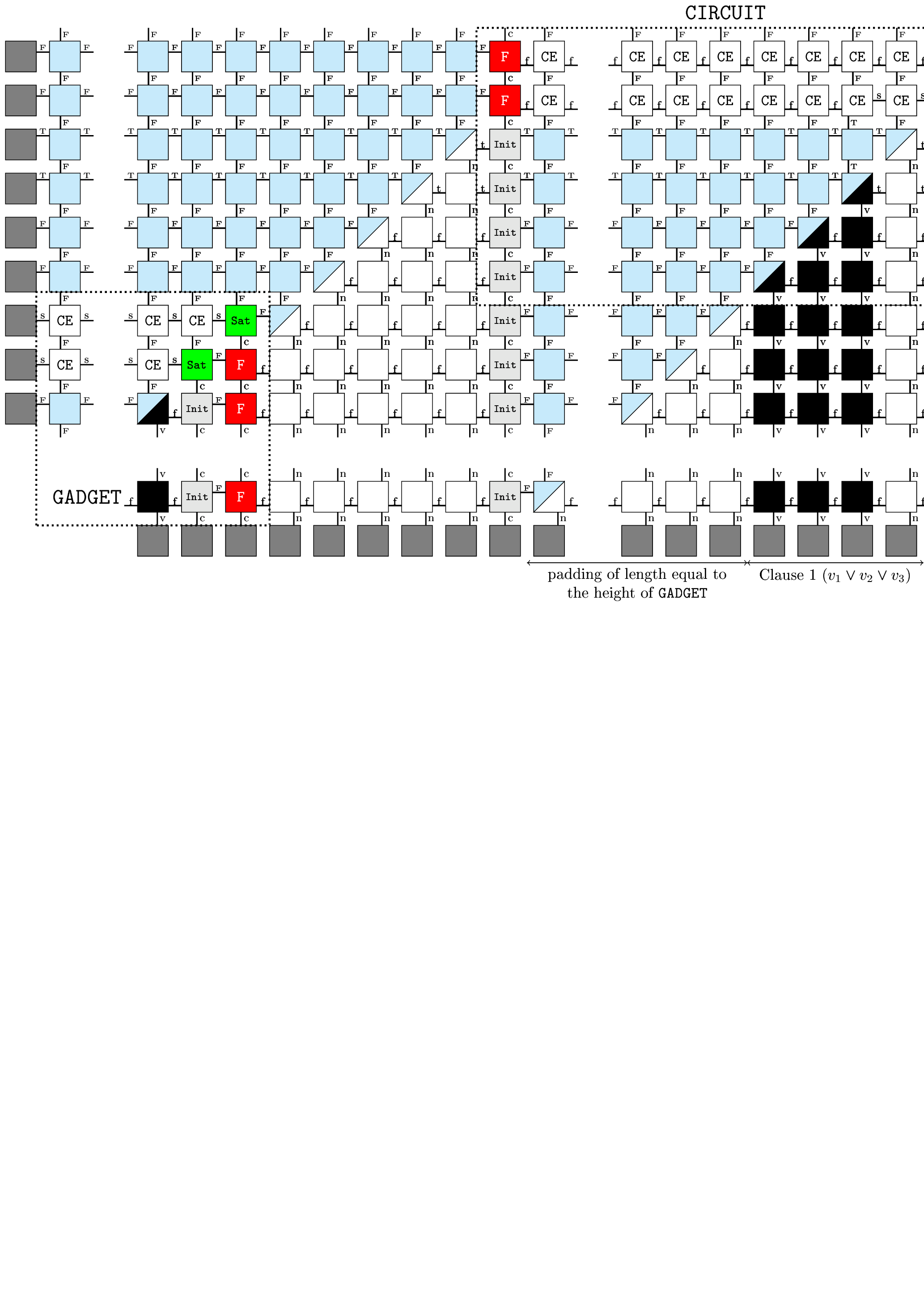}
	\foreach \x in {3, 4, ..., 12} {\node[tile,fill=gray] at (\x, 0) {};}
	\foreach \x in {14, 15, ..., 21} {\node[tile,fill=gray] at (\x, 0) {};}
	\foreach \y in {3, 4, ..., 11} {\node[tile,fill=gray] at (0, \y) {};}

	\draw[->] (11.5, -0.5) -- node[below] {padding of length equal to} (16.5, -0.5); \draw[->] (16.5, -0.5) -- (11.5, -0.5); \node at (14, -1.2) {the height of {\tt GADGET}}; 
	\draw[->] (16.5, -0.5) -- node[below] {Clause 1 $(v_1 \vee v_2 \vee v_3)$} (20.5, -0.5); \draw[->] (20.5, -0.5) -- (16.5, -0.5); 

	\draw[dotted,very thick] (0.35, 0.35) rectangle (5.65, 5.65); 
	\node at (1.5, 1) {\Large {\tt GADGET}}; 
	\draw[dotted,very thick] (21.65, 11.65) -- (10.35, 11.65) -- (10.35, 5.35) -- (21.65, 5.35); 
	\node at (16, 12) {\Large {\tt CIRCUIT}}; 
\end{tikzpicture}
\endpgfgraphicnamed
}
\end{center}
\caption{A joint between {\tt GADGET} and {\tt CIRCUIT}.}
\label{fig:joint}
\end{figure}

\begin{figure}[htb]
\begin{center}

\begin{minipage}{0.35\linewidth}

\begin{tikzpicture}

\draw[fill=cyan!20] (-0.35, -0.35) rectangle (2.45, 2.45); 
\draw[fill=white] (-0.35, -0.35) -- (2.45, 2.45) -- (2.45, -0.35) -- cycle;

\foreach \x in {0.35, 1.05, 1.75} {
	\draw (\x, -0.35) -- (\x, 2.45); 
}
\foreach \y in {0.35, 1.05, 1.75} {
	\draw (-0.35, \y) -- (2.45, \y); 
}

\foreach \y in {0, 0.7, 1.4, 2.1} {
	\node[graytile] at (2.8, \y) {};
	\node at (2.8, \y) {\scriptsize {\tt Init}};
	\node[tile, dashed, draw=purple, text=purple] at (3.5, \y) {\scriptsize {\tt T/F}}; 
}

\end{tikzpicture}

\end{minipage}
\begin{minipage}{0.05\linewidth}
\ \\
\end{minipage}
\begin{minipage}{0.55\linewidth}
\scalebox{0.9}{

\beginpgfgraphicnamed{template-instance}
\begin{tikzpicture}
\input{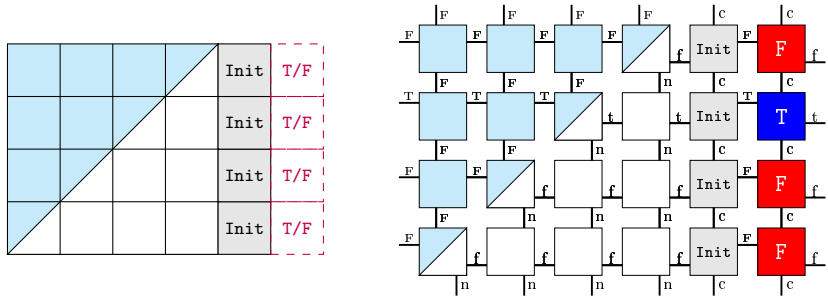}
\end{tikzpicture}
\endpgfgraphicnamed

}
\end{minipage}

\end{center}
\caption{
	{\rm (Left)} Template of 16 subpatterns of {\tt GADGET} that give lower bound 2 on the number of {\tt DGNL}-white tile types, where each of the four purple positions is either blue or red. 
	{\rm (Right)} The assembly of one of the 16 subpatterns by tiles in $T_{\rm eval}$. 
}
\label{fig:lb2-DGNL}
\end{figure}

Before verifying Property~\ref{propty:isomorphic} in Section~\ref{subsec:property_proof}, we should explain the constitution of {\tt GADGET} and how it is integrated, together with {\tt CIRCUIT}, into the pattern $P(\phi)$. 
{\tt GADGET} is composed of three parts: leftmost one including an important subpattern {\tt LB4} (Figure~\ref{fig:gadget1}), middle part (Figure~\ref{fig:gadget2}), and rightmost part. 
The subpattern {\tt LB4} is parameterized by two constants $c$ and $r$, which are set large enough for the sake of our proof of Lemma~\ref{lem:LB4} below (their actual values shall be specified at the beginning of the proof). 
It must be noted that these constants are independent of the size and clauses of $\phi$. 
As for the rightmost part, it is further split into eight parts due to its size; a one-eighth {\tt DGNLwFF} is sketched in Figure~\ref{fig:gadget_DGNLwFF} and another one-eighth {\tt DGNLbTT} is sketched in Figure~\ref{fig:gadget_DGNLbTT}. 
These parts contain the sixteen instances of a subpattern template shown in Figure~\ref{fig:lb2-DGNL} (Left) and their sixteen black analogues. 
The eight one-eighths are positioned at the northeastern corner of {\tt GADGET}; their order does not matter, but we choose {\tt wFF}--{\tt wFT}--{\tt wTF}--{\tt wTT}--{\tt bTT}--{\tt bTF}--{\tt bFT}--{\tt bFF}; here {\tt DGNL} were omitted. 

{\tt GADGET} is meticulously designed so that, being assembled from tiles in $T_{\rm eval}$, it exposes 
\begin{itemize}
\item	only {\tt F} glues to the north;
\item	only {\tt f}/{\tt t} glues to the east, except at the top where the glue is {\tt F}. 
\end{itemize}
The north {\tt F} glues enable cyan tiles to attach to their north and propagate the assignment above {\tt GADGET} toward {\tt CIRCUIT} invisibly. 
With $m$ {\tt n} glues on the $x$-axis of the seed,\footnote{Recall that $m$ is the number of variables involved in $\phi$.} the east {\tt f}/{\tt t} glues let white tiles assemble the foundation of {\tt JOINT} on which DGNL-white tiles attach diagonally in collaboration with cyan and white tiles and lower-case the assignment signals $(\false/\true \to {\tt f}/{\tt t})$ (see Figure~\ref{fig:joint}). 
{\tt CIRCUIT} and {\tt GADGET} are thus integrated into the pattern $P(\phi)$. 

	\subsection{Verification of Property~\ref{propty:isomorphic}}
	\label{subsec:property_proof}

The aim of this subsection is to verify Property~\ref{propty:isomorphic}, and hence, conclude the proof of Theorem~\ref{thm:11PATS_NPhard}. 
The verification is done through the following task: given 21 tile types which have not been colored or labelled yet, color and label them so that, using the resulting tile type set, a directed RTAS can uniquely self-assemble {\tt GADGET}. 

	\subsubsection{Coloring}

Let us handle coloring first; we will observe that the given 21 tile types must be colored as $T_{\rm eval}$ does: 4 cyan, 3 {\tt CE}, 2 white, {\tt DGNL}-white, black, {\tt DGNL}-black, {\tt Init} each, and 1 {\tt Sat}, yellow, red (F), and blue (T) each. 
In fact, we only have a room to choose colors of 10 of them because with each color, at least one tile type must be painted. 

We begin with the need for one more {\tt Init} tile type. 
For the sake of contradiction, suppose there were only one {\tt Init} tile type. 
See the rightmost column in Figure~\ref{fig:gadget2}. 
At its bottom, 2 red (F) and $2r{-}1$ blue (T) positions are found, and on top of them is one more red position (at the height $2r{+}2$). 
Since their western neighbors are all {\tt Init}, with only one {\tt Init} tile type, a directed RTAS would need to fill the blue positions with $2r{-}1$ tiles of pairwise distinct types in order to attach a red tile precisely at the height $2r{+}2$ (the hardcoded height). 
This would cost the RTAS an unaffordable $2r{-}2$ extra blue tile types (recall that $r$ was set large enough). 
Thus, we need to draw one uncolored tile type by {\tt Init} and 9 tile types remain uncolored. 

To their west is a white column (the third from the right). 
With only one white tile type, we find that the red position on top of the $2r{-}1$ blue positions must again be hardcoded from below through the {\tt Init} and red (F)/blue (T) columns. 
This is, however, unaffordable, provided $r$ is set sufficiently large. 
The same argument based rather on the fourth, fifth, and sixth leftmost columns in Figure~\ref{fig:gadget2} justifies the need of at least 2 black tile types. 
Among the 9 uncolored tile types, one has been drawn white and another has been drawn black. 
As a result, 7 tile types remain uncolored. 

Before painting them, let us present one lemma on {\tt Init}, white, and black tile types. 

\begin{lemma}\label{lem:exactly2}
	Let $col \in \{{\tt init}, white, black\}$. 
	If a directed RTAS with at most 21 tile types including exactly 2 tile types $t_1, t_2$ of color $col$ uniquely self-assembles a pattern including {\tt GADGET}, then $t_1(\west) \neq t_2(\west)$ and $t_1(\east) \neq t_2(\east)$, while $t_1(\south) = t_2(\south)$.
\end{lemma}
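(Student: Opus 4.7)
The plan is to argue by contradiction, using that certain columns of {\tt GADGET} must realize tile-type sequences that are not period-$\le 2$.

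To establish $t_1(\south) = t_2(\south)$, I consider a column of {\tt GADGET} filled entirely with $col$-coloured tiles, identified by the discussion preceding the lemma: for $col = {\tt init}$, the {\tt Init} column of Figure~\ref{fig:gadget2}; for $col$ white or black, the corresponding white or black columns in that figure. Suppose for contradiction $t_1(\south) \neq t_2(\south)$. Within such a column, the south glue of the tile at each height must equal the north glue of the tile below, so given the bottom tile every subsequent tile is forced uniquely; with only two $col$-tile types available, the whole column sequence is periodic of period $\le 2$ from the bottom. This restricts the east-glue sequence exposed by the column also to period $\le 2$, so by an analogous recurrence argument (using directedness to treat tiles with shared west glues) the tile-type sequence of the next column to the east is also forced to be period $\le 2$. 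Iterating this eastward propagation reaches the rightmost red/blue column of Figure~\ref{fig:gadget2}, which by construction reads $2$ red ({\tt F}) tiles, then $2r{-}1$ blue ({\tt T}) tiles, then $1$ more red tile; this block is not realisable as a period-$\le 2$ sequence once $r \ge 2$. Contradiction.

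The west-glue inequality $t_1(\west) \neq t_2(\west)$ is then immediate: directedness forbids two distinct tile types to coincide on both west and south glues, and the south glues were just shown to agree.

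For $t_1(\east) \neq t_2(\east)$, I would run the mirror argument. Assume $t_1(\east) = t_2(\east)$. Then the $col$-column exposes a single east glue throughout, so every tile in the column immediately east of it has the same west glue; by directedness, those tiles must differ in south glue and their sequence is forced by the south--north recurrence to be period $\le 2$. Propagating eastward as before, the red/blue column is again forced to be period $\le 2$, contradicting its given pattern. Hence $t_1(\east) \neq t_2(\east)$.

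The main obstacle is the eastward propagation argument: one must verify that a period-$\le 2$ tile-type sequence in any intermediate column really forces the next column east to be period $\le 2$, even though that column may belong to a different colour class with additional glue degrees of freedom. This reduces to a small case analysis relying on directedness and on the preceding conclusions that each of the white, {\tt Init}, and red/blue columns carries at most two tile types of its colour; the parameters $c, r$ of {\tt GADGET} being chosen large enough ensures that the aperiodic red/blue block is longer than any period generated by such propagation.
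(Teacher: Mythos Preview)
Your overall strategy—derive periodicity in the $col$-column and push the contradiction to the red/blue column—is the paper's strategy too. The execution has a real gap, though: the claim that a period-$\le 2$ tile-type sequence in one column forces the next column east also to have period $\le 2$ is false. With period-$2$ west glues and $k$ tile types available in the eastern column, the recurrence determining that column has state space of size $2k$, so the resulting period can be as large as $2k$, not $2$. At the point this lemma is invoked only $14$ of the $21$ tile types have been coloured, so the red/blue column could in principle use up to $8$ blue tile types; your assertion that it ``carries at most two tile types of its colour'' is not established here, and the eventual period in that column could be as large as $16$. Hence your concluding contradiction (``not realisable as a period-$\le 2$ sequence once $r\ge 2$'') does not go through.

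The paper sidesteps this by phrasing the contradiction as a hardcoding cost rather than a strict period bound. Once the $col$-column exposes a periodic sequence of east glues (of any bounded period), placing the single red tile at the precise height $2r{+}2$ above $2r{-}1$ consecutive blue positions would force enough distinct blue tile types to count up to that height—on the order of $r$ of them—which is unaffordable since $r$ is a constant chosen large relative to $21$. For $col={\tt Init}$ this is immediate, as the {\tt Init} column sits directly west of the red/blue column; for white and black the argument passes through one intervening column already known to have exactly two tile types, so the period at most doubles and the same hardcoding count still applies.
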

\begin{proof}
	We prove this lemma only for $col = {\tt Init}$. 
	We have already seen the need for $t_1(\east) \neq t_2(\east)$; otherwise hardcoding would be necessary in order to place the red tile at the specific height. 

	Suppose $t_1(\south)$ were different from $t_2(\south)$. 
	This distinctness forces the RTAS to assemble the second rightmost column in Figure~\ref{fig:gadget2} periodically either as $t_1t_2t_1t_2 \cdots$ or as $t_1t_2 t_2 \cdots$. 
	In any case, the column exposes a periodic sequence of east glues, and hence, the placement of the red tile at the specific height would require the unaffordable cost in hardcoding by blue tile types. 
	Therefore, $t_1(\south) = t_2(\south)$ must hold, and this implies $t_1(\west) \neq t_2(\west)$ in order for the RTAS to be directed. 
\end{proof}

Next, we focus on cyan tiles. 
As of now, just 1 tile type was drawn cyan. 
We will show that due to the subpattern {\tt LB4} in Figure~\ref{fig:gadget1}, designated by a dotted rectangle, we need either 3 more cyan tile types, or 2 more cyan tile types and 2 more tile types whose color is either red (F) or blue (T). 
The latter costs one more extra tile type, and it will turn out unaffordable later. 

\begin{figure}[tb]
\begin{center}
\begin{tikzpicture}

\bluetile{0}{1.05}{A}{$a$}{0}{$a$}{0};
\bluetile{2}{1.05}{B}{$a$}{0}{$b$}{1};
\bluetile{4}{1.05}{C}{$b$}{1}{$a$}{0};

\bluetile{8.4}{2.1}{}{}{}{}{0}; \node at (8.4, 2.1) {A/C};
\bluetile{8.4}{1.4}{B}{}{}{}{};  
\bluetile{7.7}{0.7}{B}{}{}{}{}; \bluetile{8.4}{0.7}{C}{}{}{}{0}; 
\bluetile{7}{0}{B}{}{}{}{}; \bluetile{7.7}{0}{C}{}{}{}{}; 

\draw[dotted, very thick, ->] (7, -0.7) node[below] {$b$} -- (7, 0) -- (7.7, 0) -- (7.7, 0.7) -- (8.4, 0.7) -- (8.4, 1.4) -- (9.1, 1.4) node[right] {1}; 

\end{tikzpicture}
\end{center}
\caption{
{\rm (Left)} Sole set of 3 cyan tile types with which one can self-assemble the subpattern {\tt LB4}. 
{\rm (Right)} $B$ and $C$ tiles deliver a signal via $b$ and 1 glues in a zigzag manner toward northeast. 
}
\label{fig:spurious_cyan3}
\end{figure}

\begin{lemma}\label{lem:LB4}
	If a directed RTAS with 21 tile types uniquely self-assembles a pattern including {\tt GADGET}, then it has either
	\begin{enumerate}
	\item	at least 4 cyan tile types, or
	\item	the 3 cyan tile types shown in Figure~\ref{fig:spurious_cyan3}, 1 red (F) tile type, 1 blue (T) tile type, and 2 tile types whose color is either red (F) or blue (T). 
	\end{enumerate}
\end{lemma}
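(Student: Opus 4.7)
The plan is to analyze the subpattern {\tt LB4} (the dotted rectangle in Figure~\ref{fig:gadget1}) to extract the set of (west,\,south) glue requirements that any cyan tile type must satisfy. Since the RTAS is directed, two distinct cyan tile types cannot share the same (west,\,south) pair, so counting distinct required pairs in {\tt LB4} already yields a lower bound. I would first read off, position by position inside {\tt LB4}, the sequence of glues that appear on the south and west sides of the cyan cells, listing the unavoidable (west,\,south) pairs that force distinct tile types. This list will either directly contain $\ge 4$ pairs (case 1 of the lemma), or exactly~3 pairs that can be realized only by a tile set of the form $\{A,B,C\}$ depicted in Figure~\ref{fig:spurious_cyan3}(Left): namely the ``diagonal'' tile $A=(a,a,a,a)$, the tile $B=(a,a,b,1)$ that reads an $a$ from the south and emits a new glue $1$ to the east, and its mirror $C=(b,1,a,a)$. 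I would justify this via a short case analysis on how 3 cyan tile types can cover the pairs forced by {\tt LB4}: the only compatible trio must, on encountering the single non-$a$ cyan context inside {\tt LB4}, re-inject the off-pattern signal $b/1$ rather than quench it, since directedness forbids two cyan types sharing the same (west,\,south).

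Granting the trio $\{A,B,C\}$, the next step is to trace what happens inside {\tt LB4} when $B$ or $C$ has fired once. As sketched in Figure~\ref{fig:spurious_cyan3}(Right), the glues $b$ (on vertical sides) and $1$ (on horizontal sides) form a pair that can be matched only by $B$ and $C$ themselves, so the signal propagates in a zig-zag of $B$-$C$ pairs toward the northeast and, because $A$'s west/south are both $a$, it cannot be reabsorbed by any cyan tile inside {\tt LB4}. I would now show that this zig-zag must eventually exit {\tt LB4} from the top or the right boundary (the parameters $c$ and $r$ are set large enough to guarantee this, cf.~the choice of constants at the beginning of Section~\ref{subsec:property_proof}). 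Hence the boundary of {\tt LB4} must expose at least one east glue equal to~$1$ and at least one north glue equal to~$b$ (or symmetric versions) at positions that, in the correct assembly of $P(\phi)$, carry red~(F) and blue~(T), respectively. To terminate these spurious glues, the RTAS is forced to introduce a red~(F) tile type and a blue~(T) tile type whose west/south glues match the zig-zag output, \emph{in addition} to the usual red~(F) and blue~(T) already demanded by {\tt GADGET}'s color palette. This accounts for the ``$2$ tile types whose color is either red or blue'' in case~2.

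The main obstacle I foresee is the first half: the case analysis that rules out every 3-cyan tile set other than $\{A,B,C\}$. The search space is not huge but it is delicate, because one must use the precise juxtaposition of cyan cells inside {\tt LB4} (with all four possible local color contexts appearing) together with the directedness constraint to eliminate each alternative labelling; the argument must be carried out symbolically, since the glue alphabet is a priori unrestricted, and one must track which (west,\,south) pairs are \emph{forced} to be distinct versus which may be identified. The second half, namely the propagation of the $b/1$ zig-zag and the resulting need for two extra red/blue tile types, is then a direct geometric consequence, provided $c$ and $r$ are chosen large enough that the zig-zag cannot ``run out'' of the interior of {\tt LB4} before reaching its boundary.
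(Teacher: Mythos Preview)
Your plan has two structural gaps. First, the idea of ``reading off forced $(\text{west},\text{south})$ pairs position by position inside {\tt LB4}'' cannot work: the cyan part of {\tt LB4} is a solid rectangular block, so every interior cyan cell has cyan west and south neighbors, and no $(\text{west},\text{south})$ pair is externally forced. The actual leverage comes from the \emph{boundary} of the cyan block, and it is a periodicity argument, not a pair-counting one. With only three cyan types $A,B,C$, the paper first shows that their west glues cannot be pairwise distinct (else every row of the cyan block assembles periodically with period $\le 3$, and the aperiodic colour run $[{\tt CE}]^{c}Y$ on the top row of {\tt LB4} cannot be realized with the few {\tt CE}/yellow types available once $c$ is set to $25$); symmetrically their south glues cannot be pairwise distinct (else the east edge is periodic and the column $\text{red}^{2}\,\text{blue}^{2r-1}\,\text{red}^{2r-1}\,\text{blue}^{2}$ fails for $r=13$). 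This pins down $A(\west)=B(\west)\neq C(\west)$ and $A(\south)\neq B(\south)$, and the paper then runs an exhaustive case analysis on the \emph{north and east} glues (five cases), each time deriving a periodic north- or east-glue sequence that contradicts the chosen $c$ or $r$. Only the trio in Figure~\ref{fig:spurious_cyan3} survives. (Your description of that trio is also off: in $(\north,\west,\south,\east)$ order it is $A=(a,0,a,0)$, $B=(a,0,b,1)$, $C=(b,1,a,0)$; in particular $B$ reads $b$, not $a$, from the south.)

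Second, the extra red/blue types are not obtained by ``terminating the zig-zag at the north and east boundaries of {\tt LB4}''. The north boundary of the cyan block is a row of {\tt CE}/yellow, not red/blue, so nothing about red/blue can be forced there. The paper's argument lives entirely on the east edge. The only fact used about the zig-zag is that with the trio $\{A,B,C\}$ the cyan block exposes east glues in $\{0,1\}$ with no two consecutive $1$'s (since the south neighbor of any $B$ is a $C$). One then assembles the rightmost column $\text{red}^{2}\,\text{blue}^{2r-1}\,\text{red}^{2r-1}\,\text{blue}^{2}$ bottom-up under this west-glue constraint: if there are only one red type $t_{\tt F}$ and two blue types $t_{\tt T1},t_{\tt T2}$, directedness forces $t_{\tt F}(\west)=0$, $t_{\tt T1}(\west)=1$, $t_{\tt T2}(\west)=0$ and fresh vertical glues $c,d$, after which the blue run of odd length $2r{-}1$ either gets stuck on $t_{\tt T2}$ forever or alternates $t_{\tt T1}t_{\tt T2}$ and overshoots into the next red block. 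So at least four red/blue types are needed --- a vertical-propagation/parity argument along a single column, not a geometric signal-exit argument.
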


Our proof of this lemma is so technical that presenting it at this point may distract the reader's attention from the essence of the reduction. 
Its proof is in Section~\ref{subsec:proof_lemLB4}. 
For the sake of argument to deny the second choice later, we briefly observe how the 3 cyan tiles $A, B, C$ in the choice deliver signals. 
As shown in Figure~\ref{fig:spurious_cyan3} (Right), $B$ and $C$ tiles alternately attach and deliver signals in a zigzag manner. 
Note that they cannot expose two 1 glues to the east consecutively; a 1 glue is vertically sandwiched by 0 glues. 
This is the essential defect not to let {\tt GADGET} assemble as long as the second choice is made. 

Among the 7 uncolored tile types, the first choice in Lemma~\ref{lem:LB4} draws 3 of them by cyan, while the second choice draws 4 of them. 
Now we will see that, not depending on which choice was made, we must draw one of the uncolored tile types by {\tt CE} and another by {\ce} or yellow. 
Just above {\tt LB4}, we find six yellow positions stacked vertically with a {\ce} position on top of them, and to their west is a pillar of {\tt CE}'s. 
Note that we have colored only one tile type by yellow so far, and there are at most 4 tile types left uncolored. 
With only one {\ce} type, no directed RTAS could put a {\ce} tile at the top of the six yellow tiles. 
One uncolored tile type is to be colored by {\ce}. 

The next lemma suggests that at least one of the uncolored tile types must be painted with either {\tt CE} or yellow. 
Its proof is in Section~\ref{subsec:proof_lemlb3}. 

	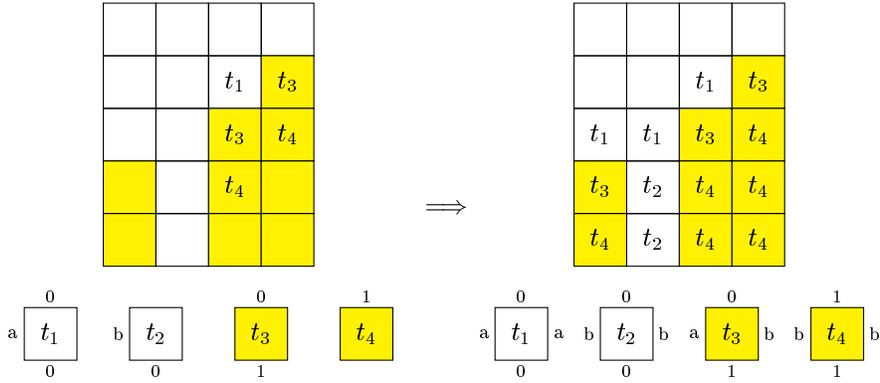
\begin{figure}[tb]
	\begin{minipage}{0.45\linewidth}
	\begin{center}
	\beginpgfgraphicnamed{spurious_CEyellow1}
	\begin{tikzpicture}

		\node[tile] at (0, 2.8) {};	\node[tile] at (0.7, 2.8) {};	\node[tile] at (1.4, 2.8) {};		\node[tile] at (2.1, 2.8) {};
		\node[tile] at (0, 2.1) {};	\node[tile] at (0.7, 2.1) {};	\node[tile] at (1.4, 2.1) {$t_1$};	\node[yellowtile] at (2.1, 2.1) {$t_3$};
		\node[tile] at (0, 1.4) {};	\node[tile] at (0.7, 1.4) {};	\node[yellowtile] at (1.4, 1.4) {$t_3$};\node[yellowtile] at (2.1, 1.4) {$t_4$};
		\node[yellowtile] at (0, 0.7) {};\node[tile] at (0.7, 0.7) {};	\node[yellowtile] at (1.4, 0.7) {$t_4$};\node[yellowtile] at (2.1, 0.7) {};
		\node[yellowtile] at (0, 0) {};	\node[tile] at (0.7, 0) {};	\node[yellowtile] at (1.4, 0) {};	\node[yellowtile] at (2.1, 0) {};

		\whitetile{-1.05}{-1.25}{$t_1$}{0}{a}{0}{};
		\whitetile{0.35}{-1.25}{$t_2$}{}{b}{0}{};
		\yellowtile{1.75}{-1.25}{$t_3$}{0}{}{1}{};
		\yellowtile{3.15}{-1.25}{$t_4$}{1}{}{}{};

	\end{tikzpicture}
	\endpgfgraphicnamed
	\end{center}
	\end{minipage}
	$\Longrightarrow$
	\begin{minipage}{0.45\linewidth}
	\begin{center}
	\beginpgfgraphicnamed{spurious_CEyellow2}
	\begin{tikzpicture}

		\node[tile] at (0, 2.8) {};	\node[tile] at (0.7, 2.8) {};	\node[tile] at (1.4, 2.8) {};		\node[tile] at (2.1, 2.8) {};
		\node[tile] at (0, 2.1) {};	\node[tile] at (0.7, 2.1) {};	\node[tile] at (1.4, 2.1) {$t_1$};	\node[yellowtile] at (2.1, 2.1) {$t_3$};
		\node[tile] at (0, 1.4) {$t_1$};\node[tile] at (0.7, 1.4) {$t_1$};\node[yellowtile] at (1.4, 1.4) {$t_3$};\node[yellowtile] at (2.1, 1.4) {$t_4$};
		\node[yellowtile] at (0, 0.7) {$t_3$};\node[tile] at (0.7, 0.7) {$t_2$};\node[yellowtile] at (1.4, 0.7) {$t_4$};\node[yellowtile] at (2.1, 0.7) {$t_4$};
		\node[yellowtile] at (0, 0) {$t_4$};\node[tile] at (0.7, 0) {$t_2$};\node[yellowtile] at (1.4, 0) {$t_4$};\node[yellowtile] at (2.1, 0) {$t_4$};

		\whitetile{-1.05}{-1.25}{$t_1$}{0}{a}{0}{a};
		\whitetile{0.35}{-1.25}{$t_2$}{0}{b}{0}{b};
		\yellowtile{1.75}{-1.25}{$t_3$}{0}{a}{1}{b};
		\yellowtile{3.15}{-1.25}{$t_4$}{1}{b}{1}{b};

	\end{tikzpicture}
	\endpgfgraphicnamed
	\end{center}
	\end{minipage}
	\caption{
		This subpattern of {\tt GADGET} can assemble in this way using the 2 {\tt CE} tile types and 2 yellow tile types shown here. 
	}
	\label{fig:lb3-bc}
	\end{figure}

\begin{lemma}\label{lem:lb3}
	If a directed RTAS with at most 21 tile types uniquely self-assembles a pattern including {\tt GADGET}, then it contains at least 2 {\ce} tile types and the sum of the number of {\tt CE} tile types and the number of yellow tile types is at least 4. 
	Moreover, if it contains exactly 2 {\tt CE} tile types $t_1, t_2$ and exactly 2 yellow tile types $t_3, t_4$, then these four tile types are labelled as depicted at the right bottom of Figure~\ref{fig:lb3-bc}. 
\end{lemma}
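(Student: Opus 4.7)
The plan is to verify the three claims of the lemma in sequence, leaning on the hardcoding and exactly-2 style arguments already developed in this section.

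First, I would establish that there must be at least 2 {\tt CE} tile types, reusing the hardcoding argument sketched just before the statement: the stack of six yellow tiles capped by a {\tt CE} tile located just above {\tt LB4} cannot be assembled if only one {\tt CE} type is available, because fixing the height of the cap would require pairwise-distinct yellow tile types underneath it, far exceeding the 21-tile budget. This is entirely analogous to the arguments already used for {\tt Init}, white, and black.

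Next, for the sum bound, I would focus on the subpattern of {\tt GADGET} redrawn in Figure~\ref{fig:lb3-bc} (right), where a diagonal staircase separates a region of {\tt CE} tiles from a region of yellow tiles. By the {\tt CE} analogue of Lemma~\ref{lem:exactly2}, if there are exactly 2 {\tt CE} tile types then they have distinct west glues and distinct east glues but a common south glue. Suppose, for contradiction, that only one yellow tile type existed. Then its single west glue would have to simultaneously match the east glue of the {\tt CE} tile that forms the western neighbor at a staircase corner and the east glue of the yellow tile that forms the western neighbor in the interior of the yellow region. Propagating this matching along two successive corners of the staircase, one derives that both {\tt CE} types must share the same east glue, contradicting directedness. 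Hence when {\tt CE} has exactly 2 types, yellow must have at least 2 types, and the sum is at least 4; if {\tt CE} has 3 or more types then the sum already reaches 4 because yellow, being a color of {\tt GADGET}, contributes at least one type.

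Finally, for the ``moreover'' clause in the $(2,2)$ case, I would read the labels off the staircase under the constraints just derived. The two {\tt CE} types $t_1, t_2$ agree on their north and south glues and differ in west and east; the two yellow types $t_3, t_4$ similarly agree on south, and the preceding argument forces their two distinct west glues to match the east glues of $t_1$ and $t_2$ respectively. Tracing the vertical-neighbor constraints along one column of the staircase then pins down the north glues so that a yellow ``corner'' tile sits atop a yellow ``interior'' tile while a {\tt CE} tile sits atop a yellow corner, and renaming the glues to $0,1$ (north/south) and $a,b$ (east/west) reproduces precisely the labeling of Figure~\ref{fig:lb3-bc}.

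The main obstacle I expect is the careful case analysis in the second step, specifically ruling out the possibility that a single yellow tile type could support the staircase by cleverly cycling its glues across the various corner and interior positions. Handling this cleanly requires reading off several rows of the staircase and verifying that the east-glue contradiction between the two {\tt CE} types is genuinely unavoidable; the rest of the proof is bookkeeping once that step is nailed down.
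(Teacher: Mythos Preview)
Your first step matches the paper and is fine. The gap is in your second step. You argue that with one yellow type, the west glue of yellow must equal the east glue of the {\tt CE} tile at each staircase corner, and then conclude that ``both {\tt CE} types must share the same east glue.'' But nothing you have said forces the {\tt CE} tiles at the two corners to be of \emph{different} types; they could both be $t_1$, in which case no contradiction arises from the corners alone. The paper's argument is different and does not compare two corners. It first observes that the single yellow type is vertically self-stacked and sits below a {\tt CE}, so $Y(\north)=Y(\south)=t_1(\south)=t_2(\south)=0$; directedness then forces $Y(\west)$ to be a \emph{new} glue $c\notin\{a,b\}$. From one corner we get $t_1(\east)=c$, and then the horizontal run of three {\tt CE} positions in row~4 (positions $(1,4),(2,4),(3,4)$ to the west of the yellow at $(4,4)$) yields the contradiction: $(2,4)$ must be $t_2$ with $t_2(\east)=a$, and then no {\tt CE} type can sit at $(1,4)$. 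So the crucial ingredients you are missing are (i) the directedness step producing the third glue $c$, and (ii) the three-in-a-row {\tt CE} configuration, not a pair of corners.

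Your third step is too quick and asserts facts that are not yet known. You claim $t_1,t_2$ agree on their north glues and $t_3,t_4$ agree on their south glues, but the {\tt CE} analogue of Lemma~\ref{lem:exactly2} gives only $t_1(\south)=t_2(\south)$; the equalities $t_1(\north)=t_2(\north)=0$, $t_3(\south)=t_4(\south)=1$, and especially $t_1(\east)=t_1(\west)=a$ each require a separate position-by-position derivation in the paper (e.g.\ $t_4(\south)=1$ because otherwise no yellow tile could sit below a $t_4$ tile, and $t_1(\east)=a$ from the specific row $t_1 t_1 t_3$ at height~3). ``Tracing vertical-neighbor constraints along one column'' does not capture these steps; the actual argument moves back and forth between several rows and columns of the staircase.
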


This lemma suggests one non-isomorphic way to paint/label 4 tile types so that resulting tiles uniquely self-assemble the pattern in Figure~\ref{fig:lb3-bc}, which is a subpattern of {\tt GADGET}, found in Figure~\ref{fig:gadget1}. 
This way, however, shall be proven improper in order for a directed RTAS with 21 tile types to self-assemble the whole {\tt GADGET} in the end. 
In any case, this lemma implies that among the at most\footnote{If the second option in Lemma~\ref{lem:LB4} is chosen, then there are only 2 uncolored tile types at this point.} 3 uncolored tile types, one must be painted either {\tt CE} (expected) or yellow (unexpected). 

Let us summarize visually how the 21 tile types have been painted so far, where a dotted square indicates an uncolored tile type:   

\begin{center}
\scalebox{0.95}{
\begin{tikzpicture}

\node[bluetile] at (0, 2) {}; \node[bluetile] at (1, 2) {}; \node[bluetile] at (2, 2) {}; \node[tile] at (3, 2) {\tt CE}; \node[tile] at (4, 2) {\tt CE}; 
\node[graytile] at (0, 1) {\scriptsize {\tt Init}}; \node[tile, fill=black] at (1, 1) {}; \node[tile] at (2, 1) {}; \node[tile, fill=red, text=white] at (3, 1) {\tt F}; \node[tile, fill=yellow] at (4, 1) {};
\node[graytile] at (0, 0) {\scriptsize {\tt Init}}; \node[tile, fill=black] at (1, 0) {};\node[tile] at (2, 0) {}; \node[tile, fill=blue, text=white] at (3, 0) {\tt T}; \node[tile, fill=green] at (4, 0) {\footnotesize {\tt Sat}}; 
\draw[fill=black] (1, 2) -- (0.65, 1.65) -- (1.35, 1.65) -- (1.35, 2.35) -- cycle; 
\draw[fill=white] (2, 2) -- (1.65, 1.65) -- (2.35, 1.65) -- (2.35, 2.35) -- cycle; 

\node at (5, 1) {\large +}; 

\node[tile] at (6, 1) {{\tt CE}};
\draw[fill=yellow] (6, 1) -- (5.65, 0.65) -- (6.35, 0.65) -- (6.35, 1.35) -- cycle;

\node at (7, 1) {\large +};

\node[bluetile] at (8, 2) {}; \node[bluetile] at (9, 2) {}; \node[bluetile] at (10, 2) {}; \node[tile, dashed] at (11, 2) {}; \node[tile, dashed] at (12, 2) {};
\node at (10, 1) {\large OR};
\node[bluetile] at (8, 0) {}; \node[bluetile] at (9, 0) {}; \node[tile, fill=red, text=white] at (10, 0) {}; \draw[fill=blue] (10, 0) -- (9.65, -0.35) -- (10.35, -0.35) -- (10.35, 0.35) -- cycle; \node[tile, fill=red, text=white] at (11, 0) {}; \draw[fill=blue] (11, 0) -- (10.65, -0.35) -- (11.35, -0.35) -- (11.35, 0.35) -- cycle; \node[tile, dashed] at (12, 0) {};

\draw(7.5, 2.5) -- node[above] {First option of Lemma~\ref{lem:LB4}} (12.5, 2.5) -- (12.5, 1.5) -- (7.5, 1.5) -- cycle; 
\draw(7.5, -0.5) -- node[below] {Second option of Lemma~\ref{lem:LB4}} (12.5, -0.5) -- (12.5, 0.5) -- (7.5, 0.5) -- cycle;

\end{tikzpicture} 
}
\end{center}

We now exclude the second choice of Lemma~\ref{lem:LB4}. 
For the sake of contradiction, suppose that with this spurious option, a directed RTAS could self-assemble {\tt GADGET}. 
Then as of now, only one tile type remains uncolored, and hence, one of the following statements must hold: 
\begin{itemize}
\item	there are only 1 {\tt DGNL}-white and 2 white tile types; 
\item	there are only 1 {\tt DGNL}-black and 2 black tile types. 
\end{itemize} 

The 16 subpatterns of {\tt GADGET} in Figure~\ref{fig:lb2-DGNL} play a role in denying the first statement. 
Consider the task for the RTAS to assemble these 16 subpatterns with only 1 {\tt DGNL}-white and 2 white tile types. 
Their assemblies are trivially identical at the main diagonal consisting of four {\tt DGNL}-white positions (1, 1) - (4, 4). 
Recall that the 2 white tile types have distinct west glues (Lemma~\ref{lem:exactly2}). 
Hence, all white positions on the first diagonal below the main diagonal are filled with tiles of the same type. 
This argument works also for the second and third diagonal below the main one. 
As a result, the 16 assemblies are identical with respect to their fourth column from the left. 
The RTAS being directed, this means that types of tiles at the bottom of the rightmost two columns ({\tt Init} and red(F)/blue(T)) completely determine which of the 16 subpatterns emerges. 
However, even with painting the last uncolored tile type with {\tt Init}, at most $12 (= 3 \times 4)$ combinations of types would be possible, that is, four of the 16 subpatterns could never assemble, a contradiction. 
Likewise, the second statement is denied by the black analogue of these 16 subpatterns. 
The second option of Lemma~\ref{lem:LB4} has been thus excluded. 
As a result, the 21 tile types have been colored partially as follows: 

\begin{center}
\scalebox{0.95}{
\begin{tikzpicture}

\node[graytile] at (0, 1) {\scriptsize {\tt Init}}; \node[tile, fill=black] at (1, 1) {}; \node[tile] at (2, 1) {}; \node[tile, fill=red, text=white] at (3, 1) {\tt F}; \node[tile, fill=yellow] at (4, 1) {}; \node[bluetile] at (5, 1) {}; \node[bluetile] at (6, 1) {}; \node[bluetile] at (7, 1) {}; \node[bluetile] at (8, 1) {}; 
\node[graytile] at (0, 0) {\scriptsize {\tt Init}}; \node[tile, fill=black] at (1, 0) {};\node[tile] at (2, 0) {}; \node[tile, fill=blue, text=white] at (3, 0) {\tt T}; \node[tile, fill=green] at (4, 0) {\footnotesize {\tt Sat}}; \node[bluetile] at (5, 0) {}; \blacklowertriangle{5}{0}; \node[bluetile] at (6, 0) {}; \whitelowertriangle{6}{0}; \node[tile] at (7, 0) {\tt CE}; \node[tile] at (8, 0) {\tt CE}; 

\node[tile] at (9, 0) {{\tt CE}};
\draw[fill=yellow] (9, 0) -- (8.65, -0.35) -- (9.35, -0.35) -- (9.35, 0.35) -- cycle;

\node at (10, 0.5) {\large +}; 

\node[tile, dashed] at (11, 1) {}; 
\node[tile, dashed] at (11, 0) {};

\end{tikzpicture} 
}
\end{center}

We conclude the coloring by proving that one of the remaining 2 uncolored tile types must be painted {\tt DGNL}-white and the other {\tt DGNL}-black. 

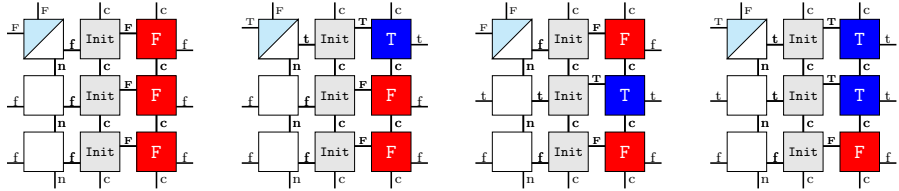
\begin{figure}[htb]
\begin{minipage}{0.225\linewidth}
\scalebox{0.75}{
\begin{tikzpicture}
\DGNLwF{0}{2}; \Initf{1}{2}; \redF{2}{2};
\whitef{0}{1}; \Initf{1}{1}; \redF{2}{1};
\whitef{0}{0}; \Initf{1}{0}; \redF{2}{0}; 
\end{tikzpicture}
}
\end{minipage}
\begin{minipage}{0.0125\linewidth}
\ \\
\end{minipage}
\begin{minipage}{0.225\linewidth}
\scalebox{0.75}{
\begin{tikzpicture}
\DGNLwT{0}{2}; \Initt{1}{2}; \blueT{2}{2};
\whitef{0}{1}; \Initf{1}{1}; \redF{2}{1};
\whitef{0}{0}; \Initf{1}{0}; \redF{2}{0}; 
\end{tikzpicture}
}
\end{minipage}
\begin{minipage}{0.0125\linewidth}
\ \\
\end{minipage}
\begin{minipage}{0.225\linewidth}
\scalebox{0.75}{
\begin{tikzpicture}
\DGNLwF{0}{2}; \Initf{1}{2}; \redF{2}{2};
\whitet{0}{1}; \Initt{1}{1}; \blueT{2}{1};
\whitef{0}{0}; \Initf{1}{0}; \redF{2}{0}; 
\end{tikzpicture}
}
\end{minipage}
\begin{minipage}{0.0125\linewidth}
\ \\
\end{minipage}
\begin{minipage}{0.225\linewidth}
\scalebox{0.75}{
\begin{tikzpicture}
\DGNLwT{0}{2}; \Initt{1}{2}; \blueT{2}{2};
\whitet{0}{1}; \Initt{1}{1}; \blueT{2}{1};
\whitef{0}{0}; \Initf{1}{0}; \redF{2}{0}; 
\end{tikzpicture}
}
\end{minipage}
\caption{Parts of 4 instances of the template in Figure~\ref{fig:lb2-DGNL} (Left).}
\label{fig:subsubpatterns}
\end{figure}

For the sake of contradiction, suppose only one {\tt DGNL}-white tile type available. 
Among the 16 instances of the template shown in Figure~\ref{fig:lb2-DGNL} (Left), consider the eight of them whose right bottom corner is {\tt Init}-red(F). 
With only two white tile types, as argued just above, the type of the {\tt Init} and red(F) tiles attaching there completely determines which of the possible 8 red(F)-blue(T) patterns assembles above. 
However, no matter how we paint the remaining 2 uncolored tile types, the number of combinations of {\tt Init} tile types and red(F) tile types cannot exceed 6, and hence, at least 2 of the 8 subpatterns could not be assembled, a contradiction. 
Hence, we cannot do without coloring one more tile type by white. 
Only one tile type being uncolored now, either there is only one red tile type or there is only one blue tile type. 
Consider the first case. 
See Figure~\ref{fig:subsubpatterns} for parts of four instances. 
At their northeast corner, we find all of FF, FT, TF, and TT (they are vertically aligned), and which of them appears is completely determined by how the downward-diagonal consisting of the top-left {\tt DGNL}-white position, middle {\tt Init} position, and bottom-right red (F) position assembles. 
For that, 4 {\tt Init} tile types are required, but there are at most 3 {\tt Init} tile types available, a contradiction. 
The argument based on the blue analogues of the subpatterns leads us to the same contradiction, provided there is only one blue tile type. 

Consequently, one of the 2 uncolored tile types is to be painted {\tt DGNL}-white. 
Based on the 16 instances of the black analogue of the template, on which white and {\tt DGNL}-white positions are painted rather black and {\tt DGNL}-black, respectively, the argument above creates the need for one more {\tt DGNL}-black tile type. 

We have proved that if a directed RTAS with at most 21 tile types uniquely self-assembles a pattern including {\tt GADGET}, then the tile types must be colored as:  
\vspace*{2mm} 

\begin{center}
\scalebox{0.95}{
\begin{tikzpicture}

\node[graytile] at (0, 1) {\scriptsize {\tt Init}}; \node[tile, fill=black] at (1, 1) {}; \node[tile] at (2, 1) {}; \node[tile, fill=red, text=white] at (3, 1) {\tt F}; \node[tile, fill=yellow] at (4, 1) {}; \node[bluetile] at (5, 1) {}; \node[bluetile] at (6, 1) {}; \node[bluetile] at (7, 1) {}; \node[bluetile] at (8, 1) {}; \node[tile] at (9, 1) {\tt CE}; 
\node[graytile] at (0, 0) {\scriptsize {\tt Init}}; \node[tile, fill=black] at (1, 0) {};\node[tile] at (2, 0) {}; \node[tile, fill=blue, text=white] at (3, 0) {\tt T}; \node[tile, fill=green] at (4, 0) {\footnotesize {\tt Sat}}; \node[bluetile] at (5, 0) {}; \blacklowertriangle{5}{0}; \node[bluetile] at (6, 0) {}; \blacklowertriangle{6}{0}; \node[bluetile] at (7, 0) {}; \whitelowertriangle{7}{0}; \node[bluetile] at (8, 0) {}; \whitelowertriangle{8}{0}; \node[tile] at (9, 0) {\tt CE}; 

\node[tile] at (10, 0) {{\tt CE}};
\draw[fill=yellow] (10, 0) -- (9.65, -0.35) -- (10.35, -0.35) -- (10.35, 0.35) -- cycle;

\end{tikzpicture} 
}
\end{center}
\vspace*{2mm}

\noindent
We will see the color of the last one be determined {\tt CE} in the next subsection. 

	\subsubsection{Glue assignment}

Having colored the 21 tile types almost completely, now we will proceed to the issue of glue assignment; how should we assign glues to the 21 tile types so that the directed RTAS with the resulting tile type set can uniquely self-assemble a pattern including {\tt GADGET}? 

It is easy to determine the glue assignment of tile types which do not share their color with another tile type, that is, the {\tt Sat}, red (F), and blue (T) tile types. 
Let us denote these tile types by $t_{\tt Sat}$, $t_{\tt F}$, and $t_{\tt T}$, respectively. 
All {\tt Sat}, red, and blue positions on {\tt GADGET} are filled with $t_{\tt Sat}$, $t_{\tt F}$, and $t_{\tt T}$ tiles, respectively. 
On {\tt GADGET}, red and blue positions are found vertically stacked so that $t_{\tt F}(\north) = t_{\tt F}(\south) = t_{\tt T}(\north) = t_{\tt T}(\south) = {\tt c}$ for some glue ${\tt c}$. 
For the sake of directedness, this lets $t_{\tt F}(\west) = {\tt F}$ and $t_{\tt T}(\west) = {\tt T}$ for some distinct glues ${\tt F}, {\tt T}$. 
A {\tt Sat} position is found to the north and to the west of a red position (see Figure~\ref{fig:gadget2}) so that $t_{\tt Sat}(\south) = t_{\tt F}(\north) = {\tt c}$ and $t_{\tt Sat}(\east) = t_{\tt F}(\west) = {\tt F}$. 
Sharing the south glue with $t_{\tt F}$ and $t_{\tt T}$, $t_{\tt Sat}(\west)$ must be different from $t_{\tt F}(\west)$ or $t_{\tt T}(\west)$ for the sake of directedness; let $t_{\tt Sat}(\west) = {\tt s}$ for some new glue ${\tt s}$. 
Their glues have been determined (partially) as follows: 
\vspace*{2mm}
\begin{center}
\begin{tikzpicture}
	\node[tile, fill=red, text=white] at (0, 0) {{\tt F}}; \glnc{0}{0}; \glwF{0}{0}; \glsc{0}{0};
	\node[tile, fill=blue, text=white] at (2, 0) {{\tt T}}; \glnc{2}{0}; \glwT{2}{0}; \glsc{2}{0};
	\node[tile, fill=green] at (4, 0) {\footnotesize {\tt Sat}}; \glws{4}{0}; \glsc{4}{0}; \gleF{4}{0};
\end{tikzpicture}
\end{center}

Next we see how the 2 {\tt Init} tile types, which we denote by $t_{\tt InitF}$ and $t_{\tt InitT}$, are assigned with glues. 
See Figure~\ref{fig:gadget2}, where we find a column of {\tt Init} positions sandwiched by two columns of red and blue positions, at which $t_{\tt F}$ and $t_{\tt T}$ tiles attach, respectively. 
Thus, without loss of generality (w.l.o.g.), the type of tile at an {\tt Init} position between red positions is $t_{\tt InitF}$ while the type of tile at an {\tt Init} position between blue positions is $t_{\tt InitT}$. 
This implies $t_{\tt InitF}(\north) = t_{\tt InitF}(\south) = t_{\tt InitT}(\north) = t_{\tt InitT}(\south)$. 
This glue is actually {\tt c} because a red position is found on top of the {\tt Init} column. 
For the sake of directedness, we need to introduce new glues ${\tt f}, {\tt t} \neq {\tt s}, {\tt F}, {\tt T}$ as respective west glues of $t_{\tt InitF}$ and $t_{\tt InitT}$. 
Now the three horizontally-adjacent positions red-{\tt Init}-red imply $t_{\tt F}(\east) = t_{\tt InitF}(\west) = {\tt f}$ and $t_{\tt InitF}(\east) = t_{\tt F}(\west) = {\tt F}$. 
Similarly, we get $t_{\tt T}(\east) = t_{\tt InitT}(\west) = {\tt t}$ and $t_{\tt InitT}(\east) = t_{\tt T}(\west) = {\tt T}$. 
The glues of the 5 tile types have been thus determined (partially) as follows: 
\vspace*{2mm}
\begin{center}
\begin{tikzpicture}
	\redF{0}{0};
	\blueT{2}{0};
	\node[tile, fill=green] at (4, 0) {\footnotesize {\tt Sat}}; \glws{4}{0}; \glsc{4}{0}; \gleF{4}{0};
	\Initf{6}{0}; 
	\Initt{8}{0};
\end{tikzpicture}
\end{center}

The same argument is applied to a white column next to an {\tt Init} column which is sandwiched by red(F)/blue(T) columns (see Figure~\ref{fig:gadget2}) to assign the two white tile types $t_{\tt wf}, t_{\tt wt}$ with glues as $t_{\tt wf}(\west) = t_{\tt wf}(\east) = {\tt f}$, $t_{\tt wt}(\west) = t_{\tt wt}(\east) = {\tt t}$, and $t_{\tt wf}(\south) = t_{\tt wt}(\south) = {\tt n}$ for some glue ${\tt n}$, which must differ from {\tt c} for directedness. 
With these, the black column in Figure~\ref{fig:gadget2} enforces the following glue assignment to the two black tile types $t_{\tt bf}$ and $t_{\tt bt}$ according to the same argument: 
\vspace*{2mm}
\begin{center}
\begin{tikzpicture}
	\whitef{0}{0}; \whitet{2}{0}; \blackf{4}{0}; \blackt{6}{0};
\end{tikzpicture}
\end{center}
where the glue {\tt v} must differ from {\tt n}. 

As seen above, finding a color with which exactly one tile type is painted is useful for the glue assignment. 
Recall the tile type whose color was not determined but just narrowed down to be either {\tt CE} or yellow. 
We now prove that it must be painted {\tt CE}, and the tile type set turns out to contain only 1 yellow tile type. 
See the bottom left corner of Figure~\ref{fig:gadget1}, where there is the pattern red(F)-\ce-{\tt Sat}. 
With only two {\ce} tile types, this pattern would imply the contradictory equation ${\tt f} = {\tt s}$. 
This is because of Lemma~\ref{lem:lb3}, which suggests that with only two {\tt CE} tile types available, {\tt CE} tiles would just let a 1-bit signal pass through from west to east. 
Thus, we must draw the tile type by {\tt CE}. 

Now the tile type set contains only one yellow tile type $t_{\tt y}$. 
See Figure~\ref{fig:gadget1}, in which yellow positions are adjacent to each other horizontally and vertically and a yellow position is to the west of a {\tt Sat} position at the bottom of {\tt LB4}. 
Thus, we have $t_{\tt y}(\west) = t_{\tt y}(\east) = t_{\tt Sat}(\west) = {\tt s}$ and $t_{\tt y}(\north) = t_{\tt y}(\south)$, and moreover, $t_{\tt y}(\south) \neq t_{\tt Sat}(\south) = {\tt c}$ must hold since they have the same west glue {\tt s}; let $t_{\tt y}(\south) = {\tt T}$ for some glue ${\tt T} \neq {\tt c}$. 
\vspace*{2mm}
\begin{center}
\begin{tikzpicture}
	\yellows{0}{0};
\end{tikzpicture}
\end{center}
It must be noted that at this point, we cannot exclude the possibility that the south glue {\tt T} is equal to {\tt n} or {\tt v}. 
It is not until the glue assignment of all the 21 tile types is completely determined at the end of this section that {\tt T} is distinguished from them.  

Let us now shift our attention to the glue assignment to the 3 {\tt CE} tile types. 
Since there is a {\tt CE} position horizontally sandwiched by two yellow positions in Figure~\ref{fig:gadget1}, one {\tt CE} tile type, say $t_{\tt CEss}$, has {\tt s} glues on its west and east edges. 
Thus, its south glue must be different from the south glue of yellow tile type or from that of {\tt Sat} tile type; let $t_{\tt CEss}(\south) = {\tt F}$ for some glue ${\tt F} \neq {\tt T}, {\tt c}$ (note that as the glue {\tt T}, the distinction of {\tt F} from {\tt n} or {\tt v} will not be made until the end of this section). 
In Figure~\ref{fig:gadget2}, we find positions red(F)-{\tt CE}-{\tt Init}-red. 
At the red positions, $t_{\tt F}$ tiles attach, so the type of tile attaching at the {\tt Init} position is $t_{\tt InitF}$. 
Thus, the {\tt CE} tile there must have the glue {\tt f} at both the west and east sides, and hence, cannot be of type $t_{\tt CEss}$. 
Let us denote its type by $t_{\tt CEff}$; then $t_{\tt CEff}(\west) = t_{\tt CEff}(\east) = {\tt f}$. 
\vspace*{2mm}
\begin{center}
\begin{tikzpicture}
	\node[tile] at (0, 0) {{\tt CE}}; \glws{0}{0}; \glsF{0}{0}; \gles{0}{0};
	\node[tile] at (2, 0) {{\tt CE}}; \glwf{2}{0}; \glef{2}{0};
\end{tikzpicture}
\end{center}

On the top row of {\tt LB4} back in Figure~\ref{fig:gadget1}, there is a pattern red(F)-{\tt CE}-{\tt CE}-yellow. 
The types of tiles at the red and yellow positions are $t_{\tt F}$ and $t_{\tt yellow}$, respectively, and $t_{\tt F}(\east) = {\tt f}$ while $t_{\tt yellow}(\west) = {\tt s}$. 
In order to assemble this pattern, therefore, we need the third {\tt CE} tile type $t_{\tt CEfs}$ with $t_{\tt CEfs}(\west) = {\tt f}$ and $t_{\tt CEfs}(\east) = {\tt s}$. 
\vspace*{2mm}
\begin{center}
\begin{tikzpicture}
	\node[tile] at (0, 0) {{\tt CE}}; \glws{0}{0}; \glsF{0}{0}; \gles{0}{0};
	\node[tile] at (2, 0) {{\tt CE}}; \glwf{2}{0}; \glef{2}{0};
	\node[tile] at (4, 0) {{\tt CE}}; \glwf{4}{0}; \gles{4}{0};
\end{tikzpicture}
\end{center}

Their north and south glues must be determined now. 
See Figure~\ref{fig:gadget1}, where we find yellow-{\tt CE}-yellow positions self-stacked vertically. 
{\ce} tiles attaching there must have {\tt s} glues on their west and east edges, and hence, are of type $t_{\tt CEss}$. 
Thus, $t_{\tt CEss}(\north) = t_{\tt CEss}(\south) = {\tt F}$. 
In the same figure, we find a {\tt CE} position whose east and south neighbors are yellow. 
A $t_{\tt CEss}$ tile cannot attach there due to its south glue mismatch; neither can a $t_{\tt CEff}$ tile due to its east glue mismatch. 
The remaining type $t_{\tt CEfs}$ must be assigned with glues properly so as for a $t_{\tt CEfs}$ tile to attach there. 
Thus, $t_{\tt CEfs}(\south) = {\tt T}$. 
\vspace*{2mm}
\begin{center}
\begin{tikzpicture}
	\CEss{0}{0};
	\node[tile] at (2, 0) {{\tt CE}}; \glwf{2}{0}; \glef{2}{0};
	\node[tile] at (4, 0) {{\tt CE}}; \glwf{4}{0}; \glsT{4}{0}; \gles{4}{0};
\end{tikzpicture}
\end{center}
\vspace*{2mm}
In Figure~\ref{fig:gadget1}, we can see three vertically-stacked {\tt CE} positions sandwiched horizontally by yellow positions. 
Hence, $t_{\tt CEss}$ tiles attach there. 
Focus on the {\tt CE} position above them, and let us denote it by $(x, y)$. 
Since its eastern neighbor is yellow, the type of {\tt CE} tile attaching there must be also $t_{\tt CEss}$. 
We see a $t_{\tt CEff}$ tile attach to its north neighbor position $(x, y+1)$ and a $t_{\tt CEfs}$ tile attach to its western neighbor position $(x-1, y)$. 
For these last two placements, it suffices to observe that at all {\tt CE} positions just above the stair-like yellow positions is $t_{\tt CEfs}$ and tiles at all the consecutive {\tt CE} positions to the west of $t_{\tt CEfs}$ must be of type $t_{\tt CEff}$. 
Now, around $(x, y)$, tiles assemble as: 
\[
\begin{array}{ccccc}
	t_{\tt CEff} & t_{\tt CEff} & t_{\tt CEff} & t_{\tt CEfs} & \mbox{\textcolor{yellow}{$\blacksquare$}}\\
	t_{\tt CEff} & t_{\tt CEfs} & \fbox{$t_{\tt CEss}$} & \mbox{\textcolor{yellow}{$\blacksquare$}} & \mbox{\textcolor{yellow}{$\blacksquare$}}\\
	t_{\tt CEfs} & \mbox{\textcolor{yellow}{$\blacksquare$}} & t_{\tt CEss} & \mbox{\textcolor{yellow}{$\blacksquare$}} & \mbox{\textcolor{yellow}{$\blacksquare$}} \\
	\mbox{\textcolor{yellow}{$\blacksquare$}} & \mbox{\textcolor{yellow}{$\blacksquare$}} & t_{\tt CEss} & \mbox{\textcolor{yellow}{$\blacksquare$}} & \mbox{\textcolor{yellow}{$\blacksquare$}} \\
	\mbox{\textcolor{yellow}{$\blacksquare$}} & \mbox{\textcolor{yellow}{$\blacksquare$}} & t_{\tt CEss} & \mbox{\textcolor{yellow}{$\blacksquare$}} & \mbox{\textcolor{yellow}{$\blacksquare$}}
\end{array}
\]
where the position $(x, y)$ is indicated by the box. 
Thus, $t_{\tt CEff}(\south) = {\tt F}$, and this in turn gives $t_{\tt CEff}(\north) = t_{\tt CEfs}(\north) = {\tt F}$. 
Now the glues of the 3 {\tt CE} tile types have been determined completely as follows: 
\vspace*{2mm}
\begin{center}
\begin{tikzpicture}
	\CEss{0}{0};
	\CEff{2}{0};
	\CEfs{4}{0};
\end{tikzpicture}
\end{center}
They have one useful property. 
\begin{property}\label{propty:yellow_CEss}
	Let $x, y \in \mathbb{N}_0$ and $d \ge 1$. 
	If ${\tt GADGET}(x, y)$ is yellow, while for all $1 \le i \le d$, ${\tt GADGET}(x{+}i, y)$ is {\tt CE}, then the type of tile at $(x{+}d, y)$ is $t_{\tt CEss}$. 
\end{property}

Before proceeding to the glue assignment of the remaining 8 tile types (4 cyan, 2 {\tt DGNL}-white, and 2 {\tt DGNL}-black), we determine the north glue of $t_{\tt Sat}$. 
In Figure~\ref{fig:gadget2}, there is one {\tt Sat} position whose north neighbor is {\tt CE} and whose northwestern neighbor is yellow. 
Due to Property~\ref{propty:yellow_CEss}, the type of tile attaching at this {\tt CE} position is $t_{\tt CEss}$, and hence, $t_{\tt Sat}(\north) = t_{\tt CEss}(\south) = {\tt F}$. 
Let us present all the 13 tile types whose glues are completely determined so far: 
\vspace*{2mm}
\begin{center}
\begin{tikzpicture}
	\redF{0}{1.5}; \blueT{1.5}{1.5}; \greenSat{3}{1.5}; \yellows{4.5}{1.5}; \Initf{6}{1.5}; \Initt{7.5}{1.5};
	\whitef{0}{0}; \whitet{1.5}{0}; \blackf{3}{0}; \blackt{4.5}{0}; \CEss{6}{0}; \CEff{7.5}{0}; \CEfs{9}{0};
\end{tikzpicture}
\end{center}

Let us determine the glues of 4 cyan tile types. 
First, see the tenth column from the right in Figure~\ref{fig:gadget2}, on which there is a cyan position surrounded by {\tt CE}'s, {\tt Sat}, and red positions. 
Due to Property~\ref{propty:yellow_CEss} and the fact that the north glue of any {\tt CE} tile is {\tt F}, the tile attaching there must be assigned with {\tt F} glues on all of its four sides as: 
\begin{center}
\begin{tikzpicture}
	\blueff{0}{0}; \node at (0, 0) {$t_{\tt sbFF}$}; 
\end{tikzpicture}
\end{center}
Let us denote this type by $t_{\tt sbFF}$. 

See the horizontal tandem of cyan positions just above {\tt LB4} in Figure~\ref{fig:gadget1}. 
At its first and second positions, $t_{\tt sbFF}$ tiles attach. 
The tile attaching at the third one must have {\tt F} glue on its west side and {\tt T} glues on its north and south sides, and hence, it is not of type $t_{\tt sbFF}$. 
Let us denote its type by $t_{\tt sbFT}$. 
\begin{center}
\begin{tikzpicture}
	\blueff{0}{0}; \node at (0, 0) {$t_{\tt sbFF}$}; 
	\bluetile{2}{0}{}{}{}{}{}; \glnT{2}{0}; \glwF{2}{0}; \glsT{2}{0}; \node at (2, 0) {$t_{\tt sbFT}$}; 
\end{tikzpicture}
\end{center}

At the southwestern corner of {\tt LB4}, a $t_{\tt sbFF}$ tile attaches, and hence, the tile attaching to its north must have west glue {\tt T} and south glue {\tt F}. 
Thus, the tile is of type neither $t_{\tt sbFF}$ nor $t_{\tt sbFT}$; let it be $t_{\tt sbTF}$. 
Let us denote the fourth cyan tile type by $t_{\tt sbTT}$. 
\begin{center}
\begin{tikzpicture}
	\blueff{0}{0}; \node at (0, 0) {$t_{\tt sbFF}$}; 
	\bluetile{2}{0}{}{}{}{}{}; \glnT{2}{0}; \glwF{2}{0}; \glsT{2}{0}; \node at (2, 0) {$t_{\tt sbFT}$}; 
	\bluetile{4}{0}{}{}{}{}{}; \glwT{4}{0}; \glsF{4}{0}; \node at (4, 0) {$t_{\tt sbTF}$}; 
	\bluetile{6}{0}{}{}{}{}{}; \node at (6, 0) {$t_{\tt sbTT}$}; 
\end{tikzpicture}
\end{center}

We claim that $t_{\tt sbTT}(\west) = {\tt T}$. 
Suppose not; then $t_{\tt sbTF}$ would be the sole cyan tile type whose west glue is {\tt T}. 
See Figure~\ref{fig:gadget2}, where there is a pattern blue--white--{\tt Init}--cyan--blue, and it assembles as $t_{\tt T}$--$t_{\tt wt}$--$t_{\tt InitT}$--cyan--$t_{\tt T}$. 
This arises a need for a cyan tile type both of whose west and east glues are {\tt T}. 
Hence, $t_{\tt sbTF}(\east) = {\tt T}$. 
Then at the northwestern corner of {\tt LB4}, two cyan tiles of this type attach and expose glues of same kind to the north. 
Accordingly, the assembly at the {\tt CE} positions to their north is either $t_{\tt CEff}^2$ or $t_{\tt CEss}^2$, but in any case, it would cause a glue mismatch either with the red tile to the west or yellow tile to the east, a contradiction. 
The claim $t_{\tt sbTT}(\west) = {\tt T}$ has been verified. 

On the ninth column of Figure~\ref{fig:gadget2}, we find a cyan position surrounded by yellow, {\tt Sat}, and red positions from the north, west, and east, respectively. 
The north, west, and east glues of tile attaching there are required to be {\tt T}, {\tt F}, {\tt F}, respectively. 
Because $t_{\tt sbTT}(\west) = {\tt T}$, its type is $t_{\tt sbFT}$ so that we get $t_{\tt sbFT}(\east) = {\tt F}$. 

\begin{center}
\begin{tikzpicture}
	\blueff{0}{0}; \node at (0, 0) {$t_{\tt sbFF}$}; 
	\blueft{2}{0}; \node at (2, 0) {$t_{\tt sbFT}$}; 
	\bluetile{4}{0}{}{}{}{}{}; \glwT{4}{0}; \glsF{4}{0}; \node at (4, 0) {$t_{\tt sbTF}$}; 
	\bluetile{6}{0}{}{}{}{}{}; \glwT{6}{0}; \node at (6, 0) {$t_{\tt sbTT}$}; 
\end{tikzpicture}
\end{center}

See the cyan positions below the $t_{\tt sbFT}$ tile. 
The tile attaching just below it must have north glue {\tt T} and west and east glues {\tt F}, and hence, it is also of the type $t_{\tt sbFT}$. 
In this way, we figure out that at the top five positions of this cyan ninth column, $t_{\tt sbFT}$ tiles attach. 
Consider the sixth position from the top. 
The tile attaching there must have {\tt T} glues on its north, west, and east sides. 
Hence, it is of type either $t_{\tt sbTF}$ or $t_{\tt sbTT}$ and has north glue {\tt T}. 
Let us identify another cyan position at which a tile of one of these types must attach and moreover its north glue is required rather to be {\tt F}. 
Due to the requirement of different north glues, cyan tiles attaching at these positions must be of different type. 
Such a cyan position is found at the northeastern corner of {\tt LB4}. 
Its north neighbor is {\tt CE} and east neighbor is blue. 
Due to Property~\ref{propty:yellow_CEss}, the tile attaching there must have north glue {\tt F} and east glue {\tt T}, and hence, is of type either $t_{\tt sbTF}$ or $t_{\tt sbTT}$. 
As such, $t_{\tt sbTF}$ and $t_{\tt sbTT}$ tiles attach at these positions exclusively, and hence, we get: 
\begin{itemize}
\item	$t_{\tt sbTF}(\east) = t_{\tt sbTT}(\east) = {\tt T}$. 
\item	$\{t_{\tt sbTF}(\north), t_{\tt sbTT}(\north)\} = \{{\tt F}, {\tt T}\}$. 
\end{itemize}
The latter means that the north glue of any cyan tile is either {\tt F} or {\tt T}. 
As a result, $t_{\tt sbTT}(\south)$ must be either {\tt F} or {\tt T} because below these positions are cyan positions. 
It actually must be {\tt T} for the sake of directedness. 

\begin{center}
\begin{tikzpicture}
	\blueff{0}{0}; \node at (0, 0) {$t_{\tt sbFF}$}; 
	\blueft{2}{0}; \node at (2, 0) {$t_{\tt sbFT}$}; 
	\bluetile{4}{0}{}{}{}{}{}; \glwT{4}{0}; \glsF{4}{0}; \gleT{4}{0}; \node at (4, 0) {$t_{\tt sbTF}$}; 
	\bluetile{6}{0}{}{}{}{}{}; \glwT{6}{0}; \glsT{6}{0}; \gleT{6}{0}; \node at (6, 0) {$t_{\tt sbTT}$}; 
\end{tikzpicture}
\end{center}

Suppose that the north glue of $t_{\tt sbTF}$ is {\tt T} and that of $t_{\tt sbTT}$ is {\tt F}. 
Using tiles of these types, however, we cannot assemble {\tt LB4}. 
These spurious tile types have the following properties:
\begin{itemize}
\item	{\tt F}/{\tt T} signals are faithfully propagated horizontally; 
\item	An {\tt F}/{\tt T} signal is faithfully propagated from the south to the north when crossing a horizontal {\tt F} signal; 
\item	An {\tt F}/{\tt T} signal is flipped when crossing a horizontal {\tt T} signal. 
\end{itemize}
Due to the first property, the cyan portion of the fourth column of {\tt LB4} exposes the following sequence of glues to the east: ${\tt F} {\tt T}^{2r-1} {\tt F}^{2r-1} {\tt TT}$ (from the bottom to the top), which is the same as the one exposed to the east by the {\tt Init} portion of the second column of {\tt LB4}. 
The north glue {\tt T} of the yellow tile attaching at the bottom of the fifth column of {\tt LB4} crosses an odd number of {\tt T} signals while propagating northward, and turns out to be flipped and exposed as {\tt F} to the top yellow position of the column. 
The {\tt T} glue at the south prevents a yellow tile from attaching there then, a contradiction. 
Consequently, $t_{\tt sbTF}(\north) = {\tt F}$ and $t_{\tt sbTT}(\north) = {\tt T}$. 
Now the glue assignment to the cyan tile types has been accomplished as follows: 

\vspace*{2mm}

\begin{center}
\begin{tikzpicture}
	\blueff{0}{0}; \node at (0, 0) {$t_{\tt sbFF}$}; 
	\blueft{2}{0}; \node at (2, 0) {$t_{\tt sbFT}$}; 
	\bluetf{4}{0}; \node at (4, 0) {$t_{\tt sbTF}$}; 
	\bluett{6}{0}; \node at (6, 0) {$t_{\tt sbTT}$}; 
\end{tikzpicture}
\end{center}

Now only the 2 {\tt DGNL}-white and 2 {\tt DGNL}-black tile types remain free from glues. 
See the pattern in Figure~\ref{fig:lb2-DGNL} (Right), where we find a {\tt DGNL}-white position next to the pattern {\tt Init}-red(F). 
The tile attaching there hence has east glue {\tt f} and south glue {\tt n}, no matter which type of white tile attaches below. 
Let us denote its type by $t_{\tt DGNLwF}$; then $t_{\tt DGNLwF}(\south) = {\tt n}$ and $t_{\tt DGNLwF}(\east) = {\tt f}$. 
As for the other type $t_{\tt DGNLwT}$, consider another instance of the template in Figure~\ref{fig:lb2-DGNL} rather with blue (T) at the top of the rightmost column. 
Then we obtain $t_{\tt DGNLwT}(\south) = {\tt n}$ and $t_{\tt DGNLwT}(\east) = {\tt t}$. 
Black analogues of these instances assign the {\tt DGNL}-black tile types $t_{\tt DGNLbF}$ and $t_{\tt DGNLbT}$ with glues partially as $t_{\tt DGNLbF}(\south) = t_{\tt DGNLbT}(\south) = {\tt v}$, $t_{\tt DGNLbF}(\east) = {\tt f}$, and $t_{\tt DGNLbT}(\east) = {\tt t}$. 
\vspace*{2mm}
\begin{center}
\begin{tikzpicture}
	\bluetile{0}{0}{}{}{}{}{}; \whitelowertriangle{0}{0}; \glsn{0}{0}; \glef{0}{0}; 
	\bluetile{2}{0}{}{}{}{}{}; \whitelowertriangle{2}{0}; \glsn{2}{0}; \glet{2}{0}; 
	\bluetile{4}{0}{}{}{}{}{}; \blacklowertriangle{4}{0}; \glsv{4}{0}; \glef{4}{0}; 
	\bluetile{6}{0}{}{}{}{}{}; \blacklowertriangle{6}{0}; \glsv{6}{0}; \glet{6}{0}; 
\end{tikzpicture}
\end{center}

On the tenth column from the right in Figure~\ref{fig:gadget2}, there is a {\tt DGNL}-white position. 
The tile attaching there is of type $t_{\tt DGNLwF}$, and hence, it is assigned with glues as $t_{\tt DGNLwF}(\north) = t_{\tt DGNLwF}(\west) = {\tt F}$. 
As for the other type $t_{\tt DGNLwT}$, see the fourth and fifth columns in Figure~\ref{fig:gadget_DGNLwFF}, and two {\tt DGNL}-white positions are found on them in total. 
At both of them, tiles of this type attach. 
See the {\tt CE} positions just above them (on the sixth row from the bottom). 
{\tt CE} tiles attaching there must have $t_{\tt DGNLwT}(\north)$ as their south glue. 
Since $t_{\tt CEfs}$ tiles cannot attach next to each other and the south glue of other {\tt CE} tile types is {\tt F}, we get $t_{\tt DGNLwT}(\north) = {\tt F}$. 
For the sake of directedness, its west glue must not be {\tt F}. 
It actually must be {\tt T} in order to enable a tile of this type to attach to the east of cyan positions.  
\vspace*{2mm}
\begin{center}
\begin{tikzpicture}
	\DGNLwF{0}{0}; \DGNLwT{2}{0};
	\bluetile{4}{0}{}{}{}{}{}; \blacklowertriangle{4}{0}; \glsv{4}{0}; \glef{4}{0}; 
	\bluetile{6}{0}{}{}{}{}{}; \blacklowertriangle{6}{0}; \glsv{6}{0}; \glet{6}{0}; 
\end{tikzpicture}
\end{center}

Let us shift our attention to the {\tt DGNL}-black tile types $t_{\tt DGNLbF}$ and $t_{\tt DGNLbT}$. 
On the fourth column in Figure~\ref{fig:gadget2}, there is a {\tt DGNL}-black position, at which a $t_{\tt DGNLbF}$ tile attaches. 
Recall that the first two columns in the figure are identical to the rightmost two columns of Figure~\ref{fig:gadget1}. 
Therefore, we can apply Property~\ref{propty:yellow_CEss} to determine $t_{\tt DGNLbF}(\north) = {\tt F}$. 
Its west glue is determined as $t_{\tt DGNLbF}(\west) = {\tt F}$ since to its west are found cyan tiles, whose west and east glues are the same, and then a {\tt Sat} tile, with east glue {\tt F}. 
As for the assignment of the other type, see Figure~\ref{fig:gadget_DGNLbTT}, in which there is a {\tt DGNL}-black position whose north neighbor is yellow. 
The type of the {\tt DGNL}-black tile is $t_{\tt DGNLbT}$, and $t_{\tt DGNLbT}(\north) = {\tt T}$. 
Its west glue cannot be {\tt F} for the sake of directedness. 
Since east glues of cyan tiles are either {\tt F} or {\tt T} and {\tt DGNL}-black positions only ever appear to the east of cyan, in order for $t_{\tt DGNLbT}$ tiles to attach, $t_{\tt DGNLbT}(\west) = {\tt T}$ must hold. 

\vspace*{2mm}
\begin{center}
\begin{tikzpicture}
	\DGNLwF{0}{0}; \DGNLwT{2}{0}; \DGNLbF{4}{0}; \DGNLbT{6}{0}; 
\end{tikzpicture}
\end{center}

Now that all the 21 tile types have been assigned with glues completely, we should distinguish {\tt F} and {\tt T} from {\tt n} or {\tt v}. 
Compare the {\tt DGNL}-white tile type whose west glue is {\tt F} and two cyan tile types whose west glue is {\tt F}. 
For the directedness, they imply ${\tt n} \neq {\tt F}$ and ${\tt n} \neq {\tt T}$. 
In the same way, comparing {\tt DGNL}-black tile type with the west glue {\tt F} with these cyan tile types distinguishes {\tt v} from {\tt F} or {\tt T}.  
The tile type set have now turned out to be isomorphic to those in the set $T_{\tt val}$ (Property~\ref{propty:isomorphic}). 
This concludes the proof of Theorem~\ref{thm:11PATS_NPhard}. 

	\section{Proof of technical lemmas}

In this section, we prove the two lemmas left unproven in the previous section. 

	\subsection{Proof of Lemma~\ref{lem:LB4}}
	\label{subsec:proof_lemLB4}

	The color pattern of the top row of {\tt LB4} is represented as 
	\[
		\mbox{{\tt Sat}-red(F)-$[\ce]^2 Y^3 [\ce] Y^2 [\ce]^c Y [\ce]^2$-{\tt Sat}},
	\]
	where $Y$ indicates a yellow position and $c$ is some constant. 
	The rightmost column of {\tt LB4} is represented from bottom to top as ${\rm red(F)}^2$-${\rm true(T)}^{2r-1}$-${\rm red(F)}^{2r-1}$-${\rm true(T)}^2$-{\tt Sat}, where $r$ is some constant. 
	Recall that at the beginning of Section~\ref{subsec:gadget}, we claimed that the constants $c$ and $r$ are set large enough for the sake of this proof. 
	In fact, we set $c = 25$ and $r = 13$ (i.e., $2r-1 = 25$). 
	Needless to say, their values are set independently of $\phi$. 

	We will prove that if only 3 cyan tile types $A, B, C$ are available for the RTAS, then they have to be assigned with glues as shown in Figure~\ref{fig:spurious_cyan3}.  
	Below, we focus on the cyan region of {\tt LB4}; hence, for instance, by ``top row,'' we refer to the top row of the cyan region, unless otherwise noted. 

	First we deny the possibility that their west glues are pairwise distinct or all the same. 
	Indeed, with the pairwise-distinctness, the RTAS cannot help but assemble the top row in one of the following ways: 
	\[
	\begin{cases}
		A A \cdots 	& \text{if $A(\east) = A(\west)$} \\
		A B A B \cdots 	& \text{if $A(\east) = B(\west)$ and $B(\east) = A(\west)$} \\
		A B B \cdots 	& \text{if $A(\east) = B(\west) = B(\east)$} \\
		A B C A B C \cdots & \text{if $A(\east) = B(\west)$, $B(\east) = C(\west)$, and $C(\east) = A(\west)$} \\
		A B C B C \cdots & \text{if $A(\east) = B(\west)$, $B(\east) = C(\west)$, and $C(\east) = B(\west)$} \\
		A B C C \cdots	& \text{if $A(\east) = B(\west)$, $B(\east) = C(\west) = C(\east)$} 
	\end{cases}
	\]
	or their analogues obtained by changing the roles of $A, B, C$. 
	As a result, the cyan region exposes a periodic sequence of glues of period at most 3 to the north. 
	Recall that at the point where this lemma is concerned, only 7 tile types remain uncolored, and hence, even if we draw all of them by {\tt CE}, we have only 8 {\tt CE} tile types. 
	Imagine the task for the RTAS to assemble the top row of {\tt LB4}, or specifically, its subpattern $[{\tt CE}]^c Y$. 
	If the sequence of exposed glues by cyan region is of period 1 (all the glues are the same), then it would need to hardcode the position of Y by tiling all the $c = 25$ {\tt CE} positions with tiles of distinct types, but there are only at most 8 {\tt CE} tiles. 
	Even with period 3, after 24 {\tt CE} positions, pumping would occur and yellow tile would never attach, a contradiction. 
	In this way, the choice of the value for $c$ makes it impossible for the RTAS to assemble the top row if the cyan region consisting of all but the rightmost two columns exposes a periodic sequence of glues of period at most 3 to the north, and the choice of the value for $r$ is motivated analogously by the assemblability of the rightmost column. 

	Likewise, their south glues are not pairwise-distinct due to the same problem occurring on the east with large $r$. 
	Therefore, $A(\west) = B(\west) = C(\west)$ must NOT hold; otherwise, the directedness of the RTAS would imply the contradictory pairwise-distinctness of their south glues. 
	This also shows that two cyan tile types are not enough, as either their west glues or their south glues would need to be distinct for the sake of directedness. 

	Having figured out that there is no choice but $A(\west) = B(\west) \neq C(\west)$, let $A(\west) = B(\west) = 0$ and $C(\west) = 1$ for some distinct glues 0, 1. 
	For the sake of directedness, $A(\south) \neq B(\south)$ must hold; let $A(\south) = a$ and $B(\south) = b$ for some distinct glues $a, b$. 
	W.l.o.g., we can assume $C(\south) = a$. 
	Let us denote their north and east glues as follows:
	\begin{center}
	\begin{tikzpicture}
		\bluetile{0}{0}{A}{$n_1$}{0}{$a$}{$e_1$};
		\bluetile{2}{0}{B}{$n_2$}{0}{$b$}{$e_2$};
		\bluetile{4}{0}{C}{$n_3$}{1}{$a$}{$e_3$};
	\end{tikzpicture}
	\end{center}
	Recall that already at the beginning of this proof, we have denied $n_1 = n_2 = n_3$ and $e_1 = e_2 = e_3$. 

	We claim that their north glues must be either $a$ or $b$. 
	Firstly, if $n_1 \neq a, b$, then any row but the topmost one cannot help but assemble with only $B$ and $C$ tiles. 
	With $n_2 = n_3$, the second topmost row exposes a sequence of all the same glues to the north, and hence, the top row would assemble periodically, a contradiction. 
	On the other hand, $n_2 \neq n_3$ forces the RTAS to assemble the rightmost column as $B B B \cdots$, $C C C \cdots$, $B C B C \cdots$, or $C B C B \cdots$ up to its second topmost position, and this is enough for contradiction. 
	Thus, $n_1$ is $a$ or $b$. 
	Secondly, if $n_2 \neq a, b$, then any row except the top assembles with only $A$ and $C$ tiles. 
	Since $A$ and $C$ have the same south glue, these rows assemble periodically either as $A \cdots A$, $A C A C \cdots$, $A C C \cdots$, or their analogues which begin with $C$. 
	With $n_1 = n_3$, the top row would assemble periodically in one of these ways, a contradiction. 
	On the other hand, $n_1 \neq n_3$ forces the rightmost column to be assembled with tiles of sole type up to the third topmost positions, a contradiction. 
	Finally, if $n_3 \neq a, b$, then the rightmost column would assemble periodically with only $A$ and $B$ tiles as $A \cdots A$, $A B A B \cdots$, $A B \cdots B$, or their analogues which rather begin with $B$, a contradiction. 
	Therefore, $\{n_1, n_2, n_3\} = \{a, b\}$. 
	Analogously, we can prove $\{e_1, e_2, e_3\} = \{0, 1\}$. 

	Now we will prove that the one in Figure~\ref{fig:spurious_cyan3} is the only one set of 3 cyan tile types with which a directed RTAS can self-assemble {\tt LB4} provided no other cyan tile types are available. 

	\begin{description}

	\item[Case 1: $e_1 = e_2 = 0$, $e_3 = 1$ (or $n_1 = a$, $n_2 = b$, $n_3 = a$):]
		In order not to assemble the rightmost column periodically, at least one $C$ tile must be placed on the column. 
		Since a $C$ tile cannot be adjacent to any tile of distinct type, this means that a row assembles with just $C$ tiles. 
		All of the rows above would be mono-type as well, a contradiction. 
		Informally speaking, we have shown that these tiles must not copy their west glues to the east faithfully. 
		Its vertical analogue is that they must not copy their south glues to the north faithfully, that is, not that $n_1 = a$, $n_2 = b$, and $n_3 = a$. 

	\item[Case 2: $e_1 = e_2 = 1$, $e_3 = 0$ (or $n_1 = b$, $n_2 = a$, $n_3 = b$):]
		In this case, the tile types are as follows. 
	\begin{center}
	\begin{tikzpicture}
		\bluetile{0}{0}{A}{$n_1$}{0}{$a$}{$1$};
		\bluetile{2}{0}{B}{$n_2$}{0}{$b$}{$1$};
		\bluetile{4}{0}{C}{$n_3$}{1}{$a$}{$0$};
	\end{tikzpicture}
	\end{center}
		Any row admits a $C$ tile at every other position. 
		The bottom row assembles either as $C [A/B] C [A/B] \cdots$ or as $[A/B] C [A/B] C \cdots$. 
		With $n_3 = b$, any row but the bottom one would assemble periodically as $B C B C \cdots$ or $C B C B \cdots$, a contradiction. 
		Thus, $n_3 = a$, and this implies $n_1 = b$ due to Case 1. 
		If $n_2 = b$, then to the north of both $A$ and $B$ tiles are tiles of type $B$. 
		This means that the second lowest row assembles as $C B C B \cdots$, and so would all the rows above, a contradiction. 
		Thus $n_2 = a$. 

	\begin{center}
	\begin{tikzpicture}
		\bluetile{0}{0}{A}{$b$}{0}{$a$}{$1$};
		\bluetile{2}{0}{B}{$a$}{0}{$b$}{$1$};
		\bluetile{4}{0}{C}{$a$}{1}{$a$}{$0$};
	\end{tikzpicture}
	\end{center}

		To the north of $A$-tile is always a tile of type $B$. 
		This and the fact that at every row, $C$-tiles appear at every other position imply that if at a row, an $A$-tile attaches, then the assembly of the row just above is the image of that of the current row under the swap of $A$ and $B$ (for instance, above the row $CACBCA$, the row $CBCACB$ assembles). 
		This means that if both an $A$-tile and a $B$-tile attach at the bottom row, then the rightmost column is either the alternation of $A$ and $B$ tiles or consists of only $C$ tiles, a contradiction. 
		If the bottom row assembles as an alternation of $B$ and $C$ tiles, then each of the rows above is either an alternation of $A$ and $C$ tiles or that of $B$ and $C$ tiles, and hence, the topmost row would expose a periodic sequence of glues to the north, a contradiction. 
		Even if the bottom row assembles as an alternation of $A$ and $C$ tiles, this contradiction would arise. 
	\end{description}

	Due to Cases 1 and 2 and the fact that their north glues must not be all the same, now we know that $n_1 \neq n_3$ is necessary. 

	\begin{description}
	\item[Case 3: $e_1 = 0$, $e_2 = e_3 = 1$ (or $n_1 = a$, $n_2 = n_3 = b$):]
	The tile types are as follows: 
	\begin{center}
	\begin{tikzpicture}
		\bluetile{0}{0}{A}{$n_1$}{0}{$a$}{$0$};
		\bluetile{2}{0}{B}{$n_2$}{0}{$b$}{$1$};
		\bluetile{4}{0}{C}{$n_3$}{1}{$a$}{$1$};
	\end{tikzpicture}
	\end{center}
	It is clear that any row assembles as $A^* B C^*$, $A^*$, or $C^*$. 
	Since $B$ tiles cannot get adjacent to each other, if any but the top row assembles as $A^*$ (resp.~$C^*$), then $n_1 = a$ (resp.~$n_3 = a$). 
	We claim that at the bottom row, a $B$ tile must attach. 
	Indeed, if it assembled as $A^*$ (resp.~$C^*$), then $n_1 = a$ (resp.~$n_3 = a$) and all rows above would assemble using tiles of a single type, a contradiction. 
	Thus, the bottom row assembles as $A^i B C^j$ for some $i, j$. 

	With $n_2 = b$, to the north of $B$ tile is always a $B$ tile, and hence, any row above would assemble as $A^i B C^j$, and all the east glues of the cyan region would be identical, a contradiction. 
	Thus, $n_2 = a$, and hence, either $n_1$ or $n_3$ must be $b$. 
	If $n_1 = b$, then no more than one $A$ tile can be used to assemble any row but the top one. 
	In other words, any row other than the top row assembles either as $ABC^*$, $BC^*$, or $C^*$. 
	Then the sequence of east glues of the cyan region would be either $1^*$ or $1^*0$, a contradiction. 
	Thus, $n_1 = a$, and the requirement $n_1 \neq n_3$, hence, implies $n_3 = b$. 
	\begin{center}
	\begin{tikzpicture}
		\bluetile{0}{0}{A}{$a$}{0}{$a$}{$0$};
		\bluetile{2}{0}{B}{$a$}{0}{$b$}{$1$};
		\bluetile{4}{0}{C}{$b$}{1}{$a$}{$1$};
	\end{tikzpicture}
	\end{center}
	This suggests that, to the north of a $C$ tile, a $B$ tile attaches. 
	The assembly of any but the top row cannot include more than one $C$ tile as $B$ tiles must be placed above but they cannot get next to each other horizontally. 
	It is, hence, either $A^* B C$, $A^* B$, or $A^*$. 
	Therefore, the bottom two rows assemble as: 
	\begin{minipage}{0.45\linewidth}
	\[
	\begin{array}{ccccc}
	A & \cdots & A & A & B \\
	A & \cdots & A & B & C 
	\end{array}
	\]
	\end{minipage}
	\begin{minipage}{0.05\linewidth}
	or
	\end{minipage}
	\begin{minipage}{0.45\linewidth}
	\[
	\begin{array}{ccccc}
	A & \cdots & A & A & A \\
	A & \cdots & A & A & A/B  
	\end{array}
	\]
	\end{minipage}
	\vspace*{2mm}
	
	Any row above but the top one, hence, assembles as $A^*$, and in particular, the second top row exposes a sequence of $a$ glues to the north. 
	As a result, the top row would assemble either as $A^*$ or $C^*$, and expose a (periodic) sequence of same glues to the north, a contradiction. 
	\end{description}

	Before proceeding to the remaining cases, we should note that the remaining possibilities of $(n_1, n_2, n_3)$ are one of the following: 
	\begin{enumerate}
	\item	$n_1 = n_2 = a$, $n_3 = b$. 
	\item	$n_1 = b$, $n_2 = n_3 = a$. 
	\item	$n_1 = n_2 = b$, $n_3 = a$. 
	\end{enumerate}

	\begin{description}

	\item[Case 4: $e_1 = 1$, $e_2 = 0$, $e_3 = 1$ (or $n_1 = n_2 = b$, $n_3 = a$):]
	The tile types are as follows: 
	\begin{center}
	\begin{tikzpicture}
		\bluetile{0}{0}{A}{$n_1$}{0}{$a$}{$1$};
		\bluetile{2}{0}{B}{$n_2$}{0}{$b$}{$0$};
		\bluetile{4}{0}{C}{$n_3$}{1}{$a$}{$1$};
	\end{tikzpicture}
	\end{center}

	As such, the assembly of any row is represented as a factor of $B^* A C^*$. 
	How should the top row assemble? 
	We claim that it must begin with at least two $B$ tiles. 
	Indeed, otherwise, it assembles either as $BACC^*$, $ACC^*$, $C^*$, $B^*A$, or $B^*$. 
	In any case, the sequence of glues exposed by the cyan region would be periodic at least up to the third rightmost column. 
	Hence, it must assemble as $B^i BBACC^j$ for some $i, j \ge 0$. 

	Focus on this subassembly $BBAC$. 
	We claim that the type of tile attaching to the south of the second $B$ is $A$. 
	Indeed, if it were $C$, then the north glue of $C$ is fixed to $b$, but then no tile could attach to the east of this $C$ tile because $A(\south) = a$ and a tile which could attach to the east of $C$ tile is of type $C$. 
	If it were rather $B$, then the north glue of $B$ is $b$ so that to its east, only an $A$ tile can attach. 
	Hence, $A$ is provided with the north glue $a$, and this fixes the north glue of $C$ to $b$ because they are known to have distinct north glues. 
	However, then the second top row could not assemble because no tile could attach to the south of $C$ of the subassembly. 
	Hence, the top two rows assemble as: 
	\[
	\begin{array}{cccccccc}
	B & \cdots & B & B & A & C & \cdots & C \\
	B & \cdots & B & A & C & C & \cdots & C \\
	\end{array}
	\]
	and they impose $A(\north) = B(\north) = b$ and $C(\north) = a$. 
	This means that to the south of $C$ tile, only a $C$ tile can attach. 
	As a result, the rightmost column would expose a periodic sequence of glues to the east, a contradiction. 
	\end{description}

	The analysis of Case 4 has denied the possibility that $n_1 = n_2 = b$ and $n_3 = a$. 
	Now then the glue $n_2$ has been fixed to $a$. 
	The glue $e_3$ has been also fixed to 0. 
	The tile types are: 
	\begin{center}
	\begin{tikzpicture}
		\bluetile{0}{0}{A}{$n_1$}{0}{$a$}{$e_1$};
		\bluetile{2}{0}{B}{$a$}{0}{$b$}{$e_2$};
		\bluetile{4}{0}{C}{$n_3$}{1}{$a$}{$0$};
	\end{tikzpicture}
	\end{center}

	\begin{description}

	\item[Case 5: $e_1 = 1$, $e_2 = e_3 = 0$ (or $n_1 = b$, $n_2 = n_3 = a$):]
	The tile types are as follows: 
	\begin{center}
	\begin{tikzpicture}
		\bluetile{0}{0}{A}{$n_1$}{0}{$a$}{$1$};
		\bluetile{2}{0}{B}{$a$}{0}{$b$}{$0$};
		\bluetile{4}{0}{C}{$n_3$}{1}{$a$}{$0$};
	\end{tikzpicture}
	\end{center}

	The assembly of any row is represented as a factor of $(ACB^*)^*$. 
	We claim that any row but the top or bottom assembles in such a way that 
	\begin{enumerate}
	\item	two $B$ tiles do not get next to each other; 
	\item	$ACAC$ never appears. 
	\end{enumerate}
	In other words, we claim that the assembly of any row but the top or bottom is a factor of $(ACB)^*$. 
	Recall the necessity of $n_1 \neq n_3$, that is, one of them is $a$ and the other is $b$. 
	The first condition is certified by observing that no row can expose consecutive two $b$ glues to the north. 
	As for the second, suppose that we found $ACAC$ on a row. 
	If $n_1$ is $b$, then $n_3$ is $a$, and to the north of $A$ tiles, $B$ tiles must attach as: 
	\[
	\begin{array}{cccc}
	B & @ & B & \\
	A & C & A & C
	\end{array}
	\]
	However, then no tile could attach at the position $@$.
	The other case of $n_3$ being $b$ leads us to the same contradiction. 
	The second condition has been thus certified.

	Since $B$ and $C$ tiles have the same east glue, on the second rightmost column, an $A$ tile must occur. 
	We focus on one of such $A$ tiles; below it is marked as \fbox{$A$}. 
	Due to the above condition, around \fbox{$A$}, the assembly is like $\cdots B A C B \fbox{$A$} C$. 

	Let us observe how tiles attach around; we consider only the subcase when $n_3 = b$ (i.e., $n_1 = a$); the other subcase $n_1 = b$ and $n_3 = a$ is essentially symmetric and has the same effect. 
	In this subcase, the type of a tile above $C$ tile is $B$. 
	The row above, if any, assembles as 
	\[
	\begin{array}{cccc}
	B & A & C & B \\
	C & B & \fbox{$A$} & C
	\end{array}
	\]
	The assembly of rows above proceed in this way as follows: 
	\[
	\begin{array}{cccc}
	C & B & A & C \\
	A & C & B & A \\
	B & A & C & B \\
	C & B & \fbox{$A$} & C
	\end{array}
	\]
	The rows below assemble in the same way as: 
	\[
	\begin{array}{cccc}
	C & B & \fbox{$A$} & C \\
	A & C & B & A \\
	B & A & C & B \\
	C & B & A & C 
	\end{array}
	\]
	As a result, the rightmost column would assemble periodically, a contradiction. 

	\end{description}

	Now then only the tile type set depicted in Figure~\ref{fig:spurious_cyan3} has remained valid. 
	Let us reproduce it here for the sake of arguments below: 
	\begin{center}
	\begin{tikzpicture}

	\bluetile{0}{0}{A}{$a$}{0}{$a$}{0};
	\bluetile{2}{0}{B}{$a$}{0}{$b$}{1};
	\bluetile{4}{0}{C}{$b$}{1}{$a$}{0};

	\end{tikzpicture}
	\end{center}
	Observe that the south neighbor of $B$ tile is always of type $C$. 
	This suggests that being assembled with tiles of these types, {\tt LB4} does not expose two consecutive 1 glues eastward. 
	This property plays an important role in proving the need of 4 tile types of color red(F) or blue(T) in order to assemble {\tt LB4} with cyan tiles of these 3 types. 

	What we actually prove is that with at most 3 red(F)/blue(T) tile types, the rightmost column of {\tt LB4}, consisting of {\tt F}'s and {\tt T}'s, cannot be assembled. 
	Suppose there were at most 3 red(F)/blue(T) tile types. 
	Then either there is a sole red(F) tile type with at most 2 blue(T) tile types, or there is a sole blue(T) tile type with at most 2 red(F) tile types. 

	Let us only show that the rightmost column cannot assemble in the first case, as the argument for 1 blue(T) tile type can follow the same steps at analogous indexes. 
	Let $t_{\tt F}$ be the red(F) tile type and $t_{\tt T1}, t_{\tt T2}$ be the blue(T) tile types. 
	At all red(F) positions, $t_{\tt F}$ tiles are to attach. 
	Hence, $t_{\tt F}(\north) = t_{\tt F}(\south)$. 
	See the $2d-1$ consecutive red(F) positions on this column. 
	Due to the above-mentioned property of east glues of cyan tiles, $t_{\tt F}$ tiles forming this portion receive glue 0 from the west. 
	Thus, $t_{\tt F}(\west) = 0$, and this demands $t_{\tt F}(\south)$ be different from $a$ or $b$; let $t_{\tt F}(\north) = t_{\tt F}(\south) = c$.  
	\begin{center}
	\begin{tikzpicture}
		\node[tile, fill=red, text=white] at (0, 0) {{\tt F}}; 
		\node at (0, 0.6) {$c$}; 
		\node at (-0.6, 0) {0}; 
		\node at (0, -0.6) {$c$}; 

		\node[tile, fill=blue, text=white] at (2, 0) {{\tt T1}};
		\node[tile, fill=blue, text=white] at (4, 0) {{\tt T2}};
	\end{tikzpicture}
	\end{center}

	See the lowest blue(T) position. 
	W.l.o.g., the type of tile attaching there is $t_{\tt T1}$. 
	Then $t_{\tt T1}(\south) = c$, and hence, $t_{\tt T1}(\west)$ must not be 0 for the directedness; since cyan tiles can expose only 0 or 1 to their east, $t_{\tt T1}(\west) = 1$. 
	Since a tile attaching at its north neighbor cannot receive a glue 1 from the west, its type cannot be $t_{\tt T1}$, that is, it is $t_{\tt T2}$. 
	Hence, $t_{\tt T2}(\west) = 0$, and this requires $t_{\tt T2}(\south)$ be distinct from $a, b, c$; let $t_{\tt T2}(\south) = d$. 
	\begin{center}
	\begin{tikzpicture}
		\node[tile, fill=red, text=white] at (0, 0) {{\tt F}}; 
		\node at (0, 0.6) {$c$}; 
		\node at (-0.6, 0) {0}; 
		\node at (0, -0.6) {$c$}; 

		\node[tile, fill=blue, text=white] at (2, 0) {{\tt T1}};
		\node at (2, 0.6) {$d$}; 
		\node at (1.4, 0) {1}; 
		\node at (2, -0.6) {$c$}; 

		\node[tile, fill=blue, text=white] at (4, 0) {{\tt T2}};
		\node at (4, 0.6) {}; 
		\node at (3.4, 0) {0}; 
		\node at (4, -0.6) {$d$}; 
	\end{tikzpicture}
	\end{center}
	The column has assembled from the bottom as $t_{\tt F} t_{\tt F} t_{\tt T1} t_{\tt T2}$. 
	Due to the lack of a third blue(T) tile type, the north of $t_{\tt T2}$ must be either $c$ or $d$. 
	If it were $d$, then the column assembles as $t_{\tt F}^2 t_{\tt T1} t_{\tt T2}^{2d-2}$, but then it still exposes glue $d$ to its north and only a $t_{\tt T2}$ tile would attach, a contradiction. 
	Otherwise, the column assembles as $t_{\tt F}^2 (t_{\tt T1} t_{\tt T2})^{d-1} t_{\tt T1}$ and even in this case, its north neighbor would be colored blue(T) by choice of odd number of consecutive blue(T) positions, a contradiction. 
\hfill\qquad $\Box$

	\subsection{Proof of Lemma~\ref{lem:lb3}}
	\label{subsec:proof_lemlb3}

	\begin{figure}[tb]
	\begin{minipage}{0.45\linewidth}
	\begin{center}
	\includegraphics{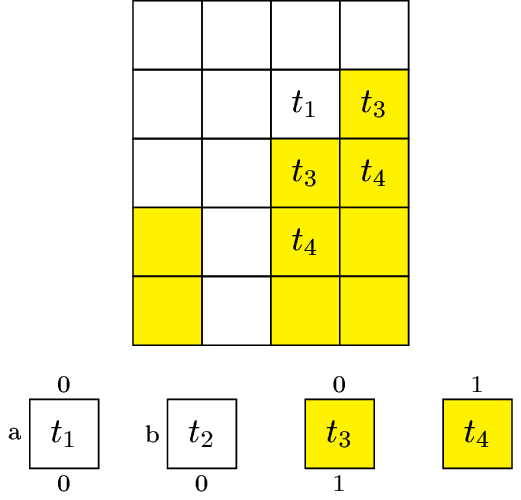}
	\end{center}
	\end{minipage}
	$\Longrightarrow$
	\begin{minipage}{0.45\linewidth}
	\begin{center}
	\includegraphics{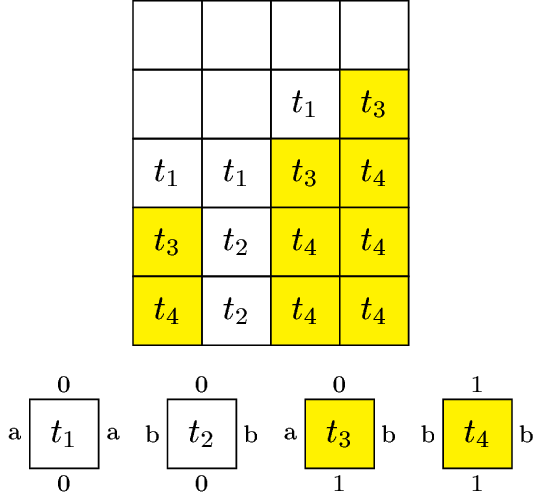}
	\end{center}
	\end{minipage}
	\caption{
		This is just a reproduction of Figure~\ref{fig:lb3-bc}. 
	}
	\label{fig:lb3-bc_reproduction}
	\end{figure}

Here, we prove Lemma~\ref{lem:lb3}. 
Since it refers to Figure~\ref{fig:lb3-bc}, we reproduce it here as Figure~\ref{fig:lb3-bc_reproduction}. 

\begin{figure}[tb]
\begin{minipage}{0.45\linewidth}
\begin{center}

\scalebox{1.0}{
\begin{tikzpicture}

\node[tile] at (0, 2.8) {};	\node[tile] at (0.7, 2.8) {};	\node[tile] at (1.4, 2.8) {};		\node[tile] at (2.1, 2.8) {};
\node[tile] at (0, 2.1) {};	\node[tile] at (0.7, 2.1) {};	\node[tile] at (1.4, 2.1) {\large $t_1$};	\node[yellowtile] at (2.1, 2.1) {};
\node[tile] at (0, 1.4) {};	\node[tile] at (0.7, 1.4) {};	\node[yellowtile] at (1.4, 1.4) {};	\node[yellowtile] at (2.1, 1.4) {};
\node[yellowtile] at (0, 0.7) {};\node[tile] at (0.7, 0.7) {};	\node[yellowtile] at (1.4, 0.7) {};	\node[yellowtile] at (2.1, 0.7) {};
\node[yellowtile] at (0, 0) {};	\node[tile] at (0.7, 0) {};	\node[yellowtile] at (1.4, 0) {};	\node[yellowtile] at (2.1, 0) {};

\end{tikzpicture}
}
\end{center}
\end{minipage}
\begin{minipage}{0.45\linewidth}
\begin{center}
\begin{tikzpicture}
	\whitetile{0}{0}{$t_1$}{0}{a}{0}{}; 
	\whitetile{2}{0}{$t_2$}{}{b}{0}{}; 
	\yellowtile{4}{0}{}{0}{c}{0}{}; 
\end{tikzpicture}

\end{center}
\end{minipage}
\caption{
	{\rm (Left)} A subpattern of {\tt GADGET} found to the north of {\tt LB4} in Figure~\ref{fig:gadget1}, where {\tt CE} positions are drawn simply by white for clarity.
	{\rm (Right) An imaginary set of two {\tt CE} tile types and one yellow tile type with which the left pattern could be assembled.} 
}
\label{fig:lb3-a}
\end{figure}
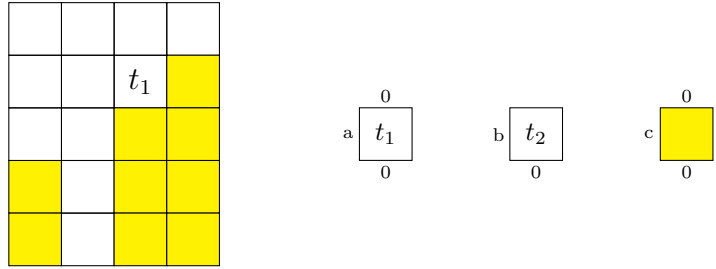

	We have already seen that any directed RTAS with at most 21 tile types needs at least two {\tt CE} tile types in order to self-assemble {\tt GADGET}. 
	If it has exactly two of them, say $t_1, t_2$, then as done in Lemma~\ref{lem:exactly2}, we can prove that $t_1(\west) \neq t_2(\west)$ and $t_1(\east) \neq t_2(\east)$, while $t_1(\south) = t_2(\south)$. 
	Let $t_1(\west) = a$ and $t_2(\west) = b$ for some distinct labels $a, b$, and let $t_1(\south) = t_2(\south) = 0$. 

	With three {\tt CE} tile types, the first statement of this lemma is trivial. 
	Hence, it suffices to prove that if the RTAS has exactly two {\tt CE} tile types $t_1, t_2$, then it must have at least 2 yellow tile types. 
	For the sake of contradiction, suppose that there were only one yellow tile type instead. 
	See Figure~\ref{fig:lb3-a}, where a subpattern of {\tt GADGET} is depicted, with {\tt CE} tiles being drawn rather just white for clarity. 
	W.l.o.g., the type of {\tt CE} tile at (3, 4) is $t_1$.
	Being self-stacked, the sole yellow tile type has the same north and south glues, and moreover, the glue is the same as the south glues of $t_1$ and $t_2$. 
	This means that its west glue must be distinct from $a$ or $b$; let it be $c$ (see Figure~\ref{fig:lb3-a} (Right)). 
	Then $t_1(\east) = c$, and this means that a $t_1$ tile cannot be adjacent to another $t_1$ tile horizontally, so the type of tile at (2, 4) is $t_2$. 
	However, then neither $t_1$ nor $t_2$ tiles can be at (1, 4) due to the east glue mismatch. 
	Therefore, if the RTAS has only 2 {\tt CE} tile types, it must have at least 2 yellow tile types $t_3$ and $t_4$. 

	Now let us prove the second statement of the lemma. 
	W.l.o.g., the type of yellow tile at (4, 4) is $t_3$. 
	As proved above, $t_3(\south)$ is not 0; let $t_3(\south) = 1$.  
	Not depending on the type of tile at (4, 5), $t_3(\north) = 0$. 
	This means that the type of tile at (4, 3) is not $t_3$ but $t_4$, and hence, let $t_4(\north) = t_3(\south) = 1$. 
	As shown in Figure~\ref{fig:lb3-bc_reproduction} (left), then the tiles at (1, 2) and (3, 3) are of type $t_3$ and their south neighbors are of type $t_4$. 
	Thus, $t_3(\east) = t_4(\west)$, and this glue is either $a$ or $b$ (see the positions (1, 2) and (2, 2)). 
	This means $t_4(\south) \neq 0$ or more strongly $t_4(\south) = 1$ because otherwise no yellow tile could attach to the south of a $t_4$ tile. 
	As illustrated in Figure~\ref{fig:lb3-bc_reproduction} (Right), any yellow column is to self-assemble in such a way that all but its topmost position is filled with $t_4$ tiles. 
	Since $t_3(\south) = t_4(\south) = 1$, their west glues must disagree, and this means that the white west neighbor of $t_3$ tile is always of type $t_1$ whereas that of $t_4$ tile is always of type $t_2$. 
	Now the resulting assembly of the pattern looks partially as depicted in Figure~\ref{fig:lb3-bc_reproduction} (Right).  
	In particular, $t_1$ tiles attach at both (1, 3) and (2, 3) and a $t_3$ tile attaches at (3, 3), and hence, $t_3(\west) = t_1(\east) = t_1(\west) = a$. 
	The assembly $t_4 t_2 t_4 t_4$ of the bottom row implies $t_4(\west) = t_4(\east) = t_2(\west) = t_2(\east) = b$. 
	Finally, $t_3(\east) = t_4(\west) = b$. 
	The glue assignment has been completed as shown in Figure~\ref{fig:lb3-bc_reproduction} (Right). 
\hfill\qquad$\Box$

	\bibliographystyle{plain}
	\bibliography{11pats_arxiv}

\end{document}